\documentclass{article}

\usepackage{hyperref}
\AtBeginDocument{
   \definecolor{BrightTeal}{HTML}{45B8AC}
   \definecolor{AquaMint}{HTML}{00CED1}
  \hypersetup{
    colorlinks=true,
    linkcolor=BrightTeal,
    citecolor=Violet,
    urlcolor=AquaMint
  }
}

\usepackage[accepted]{icml2026}

\usepackage{amsmath,amsfonts,bm}

\def\eqref#1{equation~\ref{#1}}

\def\1{\bm{1}}

\DeclareMathAlphabet{\mathsfit}{\encodingdefault}{\sfdefault}{m}{sl}
\SetMathAlphabet{\mathsfit}{bold}{\encodingdefault}{\sfdefault}{bx}{n}

\usepackage[capitalize,noabbrev]{cleveref}

\usepackage{url}

\PassOptionsToPackage{dvipsnames}{xcolor}
\usepackage[table,dvipsnames]{xcolor}
\definecolor{darksalmon}{HTML}{E9967A} 
\definecolor{navyblue}{HTML}{4C9AFF}

\usepackage{tcolorbox}
\usepackage{tabularray}
\usepackage{colortbl}
\usepackage{xcolor}
\usepackage{listings}

\definecolor{codebg}{RGB}{240,245,250}
\lstdefinestyle{mypython}{
  language=Python,
  basicstyle=\ttfamily\small,
  numbers=left, numberstyle=\tiny, numbersep=6pt,
  showstringspaces=false,
  breaklines=true,
  frame=single, rulecolor=\color{black!30},
  keywordstyle=\color{RoyalBlue},
  stringstyle=\color{BrickRed},
  commentstyle=\color{Gray}
}

\definecolor{textbg}{RGB}{240,245,250}

\lstdefinestyle{mytext}{
  basicstyle=\ttfamily\small,
  backgroundcolor=\color{textbg},
  frame=single,
  rulecolor=\color{black!30},
  numbers=none,
  showstringspaces=false,
  breaklines=true
}

\usepackage[utf8]{inputenc} 
\usepackage[T1]{fontenc}    
\usepackage{booktabs}       
\usepackage{amsfonts}       
\usepackage{nicefrac}       
\usepackage{microtype}      
\usepackage{tikz}
\usetikzlibrary{arrows}
\usepackage{amsmath}
\usepackage{amssymb}
\usepackage{mathtools}
\usepackage{amsthm}
\usepackage{algorithm}
\usepackage{algpseudocode}
\usepackage{ragged2e}

\usepackage{multirow}
\tcbuselibrary{listings}
\usepackage{lipsum} 
\usepackage[normalem]{ulem}
\useunder{\uline}{\ul}{}
\usepackage{wrapfig}
\usepackage{graphicx}
\usepackage{caption}
\usepackage{subcaption}
\usepackage{cleveref}
\usepackage{pifont}
\usepackage{enumitem}
\usepackage{adjustbox}
\usepackage{xspace}
\usepackage{float}
\theoremstyle{plain}
\newtheorem{theorem}{Theorem}[section]
\newtheorem{proposition}{Proposition}

\newtheorem{corollary}[theorem]{Corollary}
\theoremstyle{definition}
\newtheorem{definition}{Definition}

\theoremstyle{remark}

\usepackage{tcolorbox}

\definecolor{algbg}{RGB}{248, 249, 250}      
\definecolor{algborder}{RGB}{220, 220, 220}   
\definecolor{algkeyword}{RGB}{0, 102, 204}    
\definecolor{algcomment}{RGB}{108, 117, 125}  
\definecolor{darkred}{RGB}{178,34,34}    

\definecolor{headerblue}{RGB}{68,114,196}
\definecolor{lightblue}{RGB}{217,225,242}
\definecolor{lightgray}{RGB}{242,242,242}

\definecolor{promptbg}{RGB}{240,245,250}   
\definecolor{promptframe}{RGB}{200,200,200} 

\newtcolorbox{promptbox}{
  colback=promptbg, colframe=promptframe,
  boxrule=0.4pt, arc=1pt,
  left=6pt, right=6pt, top=4pt, bottom=4pt,
  fonttitle=\bfseries, coltitle=black,
  breakable
}

\definecolor{questionbg}{RGB}{235,249,255}   
\definecolor{questionframe}{RGB}{135,190,230} 

\definecolor{oursaccent}{HTML}{C8E6C9}
\definecolor{oursmint}{HTML}{E6F7F4}   
\definecolor{oursmintEdge}{HTML}{80CBC4} 

\newtcolorbox{questionbox}{
  colback=questionbg,
  colframe=questionframe,
  boxrule=0.4pt, arc=1pt,
  left=6pt, right=6pt, top=4pt, bottom=4pt,
  fonttitle=\bfseries, coltitle=black,
  breakable
}

\newcommand{\ti}[1]{\textit{#1}}

\newcommand{\tb}[1]{\textbf{#1}}

\newcommand{\ours}{\tb{\texttt{HieraMAS}}\xspace}

\usepackage{etoc}
\etocdepthtag.toc{mtchapter}
\etocsettagdepth{mtchapter}{subsection}
\etocsettagdepth{mtappendix}{none}

\usepackage[textsize=tiny]{todonotes}

\icmltitlerunning{\tb{\texttt{HieraMAS}}: Optimizing Intra-Node LLM Mixtures and Inter-Node Topology for Multi-Agent Systems}

\begin{document}

\twocolumn[
  \icmltitle{\tb{\texttt{HieraMAS}}: Optimizing Intra-Node LLM Mixtures and Inter-Node Topology for Multi-Agent Systems}



  \icmlsetsymbol{equal}{*}

  \begin{icmlauthorlist}
    \icmlauthor{Tianjun Yao}{equal,mbzuai}
    \icmlauthor{Zhaoyi Li}{equal,mbzuai}
    \icmlauthor{Zhiqiang Shen}{mbzuai}
  \end{icmlauthorlist}

  \icmlaffiliation{mbzuai}{Mohamed bin Zayed University of Artificial Intelligence, Abu Dhabi, UAE}

  \icmlkeywords{Machine Learning, ICML}

  \vskip 0.3in
]



\printAffiliationsAndNotice{\icmlEqualContribution}

\begin{abstract}
Multi-agent systems (MAS) built on large language models (LLMs) have demonstrated remarkable performance across diverse tasks. Existing approaches optimize communication topology, role assignment, or LLM routing in isolation, while treating each agent as a monolithic unit—failing to exploit internal LLM mixtures that can enhance individual role capabilities. We propose \tb{\texttt{HieraMAS}}, a hierarchical agent collaboration framework with intra-node LLM mixtures and inter-node communication topology. HieraMAS introduces \textit{supernodes}, where each functional role comprises multiple heterogeneous LLMs in a propose-synthesis structure. The optimization of \tb{\texttt{HieraMAS}} poses unique credit assignment challenges, as final task performance heavily depends on LLM capabilities, potentially causing erroneous reinforcement of suboptimal configurations. We address this via a two-stage algorithm: (1) multi-level reward attribution providing fine-grained feedback at both node and system levels; and (2) graph classification treating topology selection as a holistic task rather than per-edge optimization. Experiments on reasoning and coding benchmarks demonstrate that \tb{\texttt{HieraMAS}} significantly outperforms existing methods while achieving better cost-performance trade-offs.
\end{abstract}

\section{Introduction}

Recent advances in large language model (LLM) based agents have revealed a parallel phenomenon in artificial intelligence. A growing body of research demonstrates that multi-agent systems (MAS) substantially outperform single-agent approaches across diverse tasks~\citep{du2023improving,liang2024encouraging,talebirad2023multi,chen2024agentverse}, catalyzing the development of numerous MAS frameworks~\citep{wu2024autogen,hong2024metagpt,qian2024chatdev}. Beyond application-specific designs, researchers have identified fundamental challenges in multi-agent coordination and proposed systematic optimization approaches along several key dimensions: communication topology learning~\citep{zhuge2024gptswarm,zhang2024g,zhang2025agentprune}, which optimizes the information flow structure between agents; role assignment and specialization~\citep{qian2024macnet,hong2024metagpt,liu2023dynamic}, which determines how agents are assigned distinct functional responsibilities; and LLM routing~\citep{ong2024routellm,yue2025masrouter}, which selects appropriate backbone models for different agent roles to balance cost and capability.
Concurrently, researchers have discovered that LLMs generate substantially better responses when provided with outputs from other models as auxiliary input~\citep{wang2024mixture,li2025smoa,li2025rethinking}.
This phenomenon offers an intriguing connection to MAS: MAS inherently involves collaboration among potentially heterogeneous LLMs through different roles and communication patterns, while the aforementioned phenomenon provides an alternative form of LLM collaboration through input-output composition. However, this connection remains unexplored in current research. Existing approaches either focus on role assignment~\citep{liu2023dynamic,chen2023autoagents} or LLM routing~\citep{yue2025masrouter}, but none effectively integrates the collaborativeness property, as they treat each agent as a monolithic unit rather than exploiting the potential for internal LLM mixtures to enhance individual role capabilities.

\begin{figure*}[htp]
    \centering
    \includegraphics[width=0.87\textwidth]{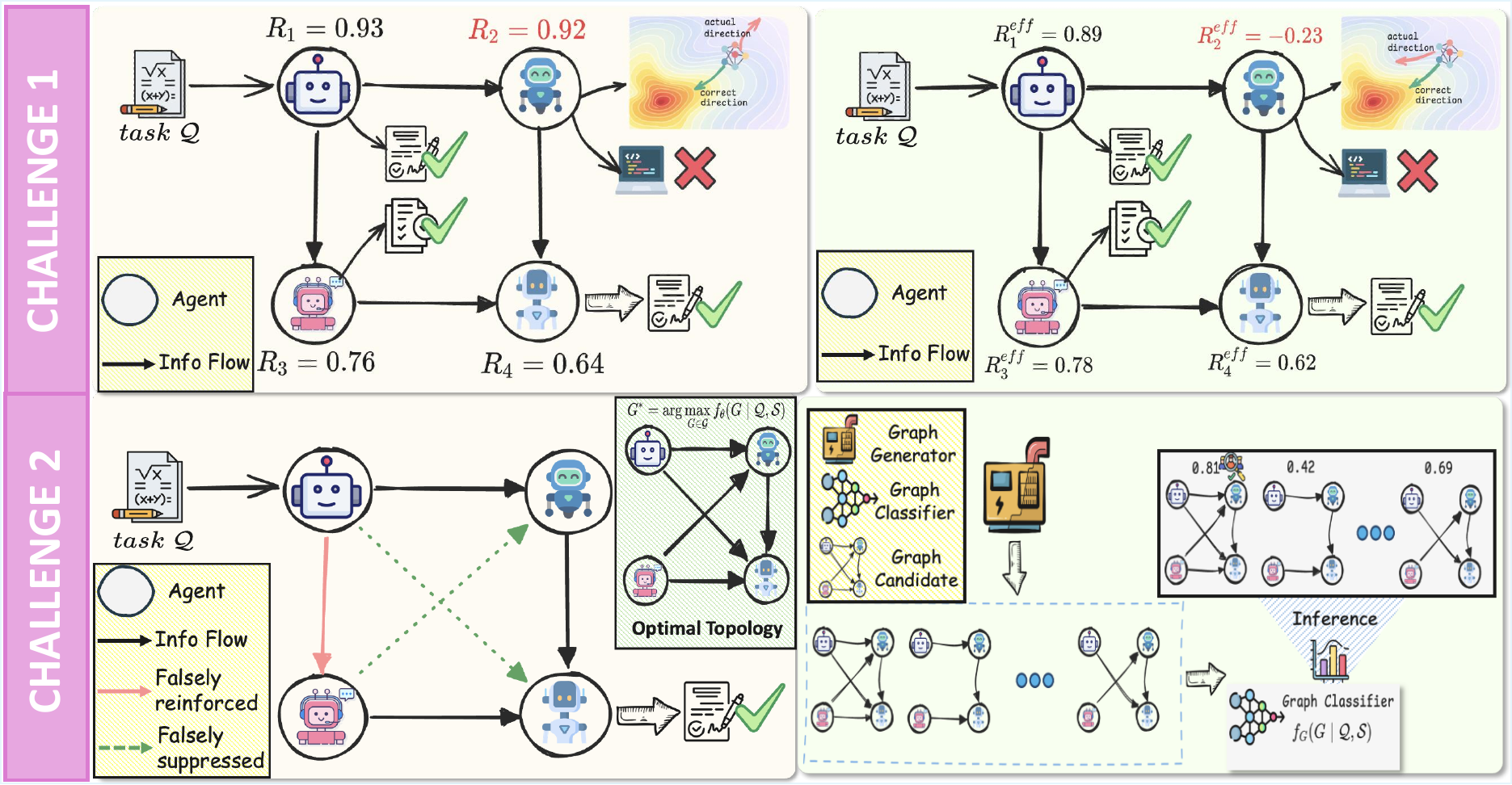}
    \caption{Illustration of two credit assignment challenges in joint optimization and our solutions. \textbf{Challenge 1}: Final task rewards mask individual node errors—Node 2 produces incorrect output but receives high reward $R_2=0.92$. \ours addresses this via multi-level rewards that provide effective per-role attribution ($R_2^{eff}=-0.23$). \textbf{Challenge 2}: Per-edge optimization suffers from entangled attribution, where edges may be falsely reinforced or suppressed. \ours reformulates topology selection as a holistic graph classification task, using a graph generator to produce candidates and a graph classifier to select the optimal topology.}
    \label{fig:mov}
\end{figure*}
In this work, we propose \ours (\tb{Hiera}rchical Collaboration with \tb{MAS}), which unifies these two forms of collaboration within a coherent MAS framework. Unlike conventional MAS where each node corresponds to a single agent, \ours introduces \ti{supernodes}, each comprising multiple potentially heterogeneous agents that implement a propose-synthesis structure to augment each functional role. Intuitively, enhancing each role's capability through internal mixtures may induce cascading effects on the overall system, e.g., some communication edges may no longer be necessary, leading to sparser and more efficient structures. Motivated by this hypothesis, \ours jointly optimizes three interconnected dimensions: \textcolor{purple}{\ding{172}} graph topology that determines inter-supernode communication patterns; \textcolor{purple}{\ding{173}} role pruning that identifies which functional roles to retain; and \textcolor{purple}{\ding{174}} LLM selection within supernodes that configures the internal agent mixtures.
This joint optimization can naturally be formulated as a credit assignment problem~\citep{sutton1984temporal} using reinforcement learning~\citep{sutton1998reinforcement}. However, compared to methods that optimize a single dimension, credit assignment becomes substantially more difficult in our setting (Figure~\ref{fig:mov}). \textcolor{purple}{\ding{172}} Relying solely on final task rewards leads to inaccurate per-role attribution, as high rewards may mask individual node errors that were compensated by other agents. \textcolor{purple}{\ding{173}} Per-edge optimization faces similar attribution challenges but is even more challenging, as the contribution of individual communication edges is entangled with both the sending and receiving nodes' behaviors, making it challenging to isolate edge-level effects. To address these challenges, \ours employs a two-phase algorithm: in the first phase, we use multi-level rewards rather than final outcome rewards alone, providing fine-grained attribution signals for optimizing supernode (Def.~\ref{def:supernode}) configurations; in the second phase, given the inherent difficulty of edge-level credit assignment, we propose treating topology selection as a holistic graph classification task rather than per-edge optimization.
We summarize our contributions as follows:
\begin{itemize}[leftmargin=*]
    \item We introduce a novel MAS paradigm where collaborativeness emerges at two levels: \emph{intra-node} collaboration through internal LLM mixtures within supernodes, and \emph{inter-node} collaboration through communication across functional roles.
    \item We propose \ours, a unified framework that jointly optimizes intra-node configurations (LLM selection and role retention) and inter-node structures (communication topology), enabling holistic system optimization.
    \item We design multi-level rewards for fine-grained per-role credit assignment, and reformulate topology optimization as a graph classification task to circumvent the intractable edge-level attribution problem.
    \item We conduct extensive experiments on programming, mathematical reasoning, and general knowledge benchmarks spanning diverse subjects. \ours achieves state-of-the-art performance while maintaining cost-efficiency.
\end{itemize}

\section{Preliminaries}

In this section, we formalize the \ours framework as a Markov Decision Process (MDP) and introduce its optimization objectives.

\subsection{Notation Establishment}

\textbf{Search Space.} We define the search space of a MAS as $\mathbb{S} = (\mathbb{M}, \mathbb{R}, \mathbb{G})$, where $\mathbb{M}$ denotes the pool of $N_m$ available LLM backbones (including a special \texttt{skip} token), $\mathbb{R}$ represents the set of $N_r$ predefined agent roles (e.g., Mathematical Analyst, Math Solver, Inspector), and $\mathbb{G}$ denotes the space of graph topologies encoding inter-agent communication (e.g., Chain, FullyConnected, Star).

\begin{definition}[Supernode]
\label{def:supernode}
A \textit{supernode} $S_i$ is a mixture-of-LLMs unit with role $r_i \in \mathbb{R}$, consisting of $W$ proposer positions and one synthesizer:
\begin{equation}
S_i = \left( r_i, \{m_{i,j}^{(w)}\}_{j=1}^{W}, m_i^{(a)} \right), \quad m_{i,j}^{(w)}, m_i^{(a)} \in \mathbb{M}.
\end{equation}
Here, $i$ indexes the supernode within the system, $j \in \{1, \ldots, W\}$ indexes proposer positions within supernode $S_i$, the superscript $(w)$ denotes proposer LLMs that generate diverse proposals, and $(a)$ denotes the synthesizer LLM that synthesizes proposer outputs into a unified response. Internally, each proposer connects to the synthesizer, which aggregates the diverse proposals from proposers and generates the final response. This internal structure is inspired by prior work~\citep{wang2024mixture} and is not optimized.
\end{definition}

\subsection{MDP Formulation}

We formulate \ours as an MDP $(\mathcal{X}, \mathcal{A}, P, R)$:

\textbf{State $\mathcal{X}$.} The state encodes the current configuration of the MAS:
\begin{equation}
x = \left( \{r_i\}_{i=1}^{N}, \{m_{i,j}^{(w)}, m_i^{(a)}\}_{i,j}, \mathbf{E}, \mathcal{Q} \right),
\end{equation}
where $\{r_i\}_{i=1}^{N}$ are the role assignments for up to $N$ supernodes, $\{m_{i,j}^{(w)}, m_i^{(a)}\}$ are the LLM assignments within each supernode, $\mathbf{E} \in \{0,1\}^{N \times N}$ is the adjacency matrix, and $\mathcal{Q}$ is the input query.

\textbf{Action $\mathcal{A}$.} The action space consists of three components: (1) \textit{Role selection}: for each supernode $S_i$, select a role $r_i \in \mathbb{R}$ or deactivate it; (2) \textit{LLM selection}: for each position within a supernode (up to $W+1$ positions), select an LLM $m \in \mathbb{M}$ or \texttt{skip}; and (3) \textit{Edge selection}: select a subset of edges $\mathbf{E} \subseteq \mathbb{G}$ to enable communication among supernodes.

\textbf{Transition $P$.} Given state $x$ and action $a$, the transition $P(x'|x,a)$ deterministically updates the system configuration and executes the MAS to obtain outputs.

\textbf{Reward $R$.} The reward balances task performance and computational cost:
\begin{equation}
R(x, a) = f\left( U(\mathcal{S}; \mathcal{Q}, a^*), C(\mathcal{S}; \mathcal{Q}) \right),
\end{equation}
where $\mathcal{S} = \{S_i\}_{i=1}^{N}$ denotes the set of all supernodes in the system, $U(\cdot)$ measures correctness against ground-truth $a^*$, and $C(\cdot)$ quantifies token expenditure.

\subsection{Optimization Objective}

Given a dataset $\mathcal{D}$ of queries $\mathcal{Q}$ with ground-truth answers $a^*$, \ours aims to learn a policy $\pi_\theta$ that maximizes expected reward:
\begin{equation}
\max_\theta \mathbb{E}_{(\mathcal{Q},a^*) \sim \mathcal{D}, a \sim \pi_\theta(\cdot|x)} \left[ R(x, a) \right].
\end{equation}

\section{Method}

\begin{figure*}[th]
    \centering
    \includegraphics[width=0.94\textwidth]{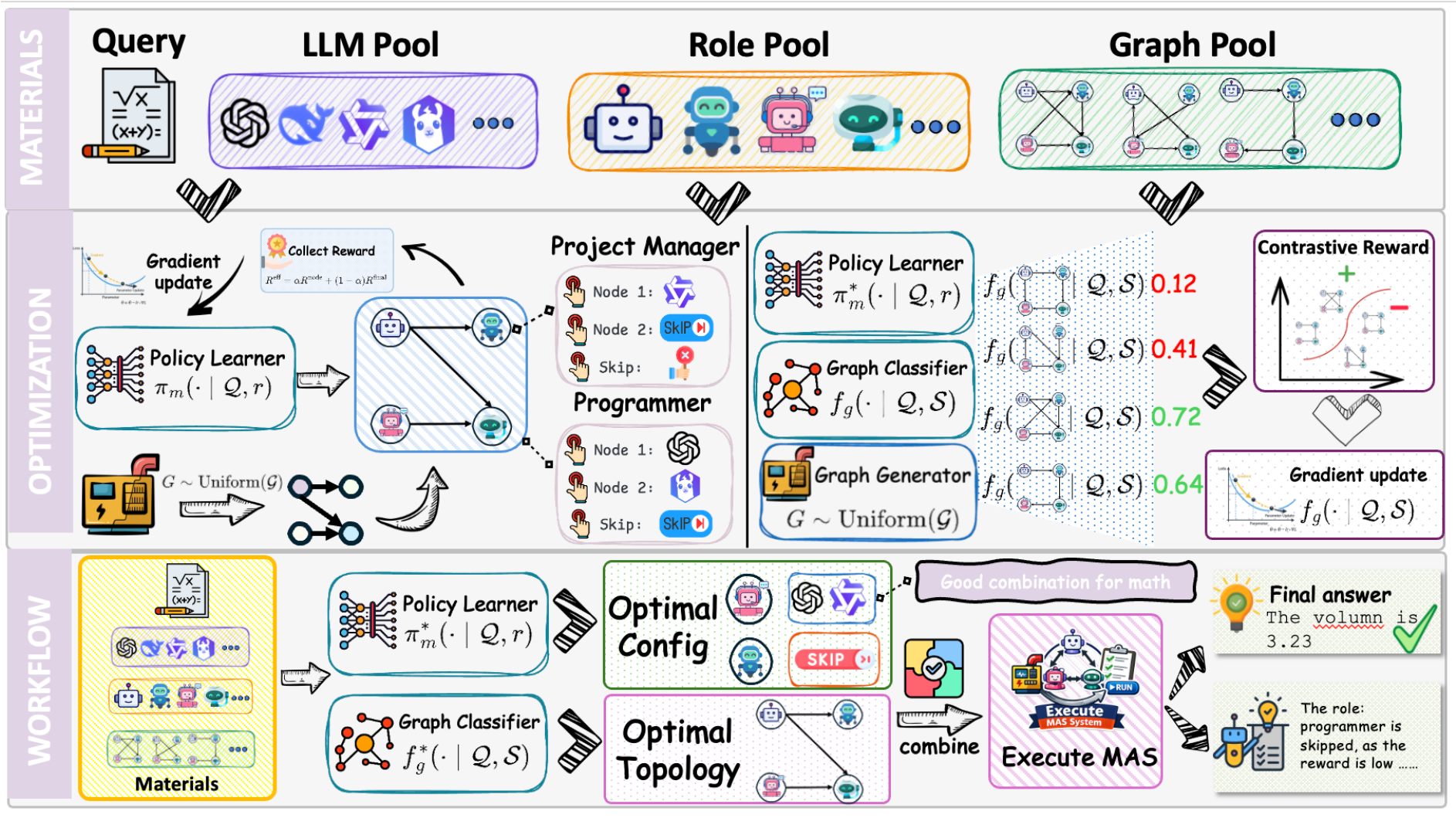}
    \caption{The overall framework of \ours. By optimizing a policy learner $\pi_m$ with multi-level rewards (Stage 1) and a graph classifier $f_G(\cdot)$ with contrastive rewards (Stage 2), \ours learns to select optimal supernode configurations and communication topologies. During inference, the trained modules jointly determine the supernode configurations and graph topology, then execute the MAS to produce the final answer.}
    \label{fig:framework}
\end{figure*}

In this section, we introduce our \ours framework. Given a query $\mathcal{Q}$, our framework progressively constructs a customized MAS by: (1) selecting optimal LLMs for each position within supernodes, and (2) selecting an appropriate inter-supernode communication topology. We employ a two-stage training algorithm that addresses the unique credit assignment challenges arising from joint optimization.

\subsection{LLM Selection within Supernodes}
\label{sec:llm_selection}

Each supernode $S_i$ contains $W$ proposer positions and one synthesizer position. We learn to select the optimal LLM for each position based on task characteristics and role requirements. We encode the query with role information as $\mathbf{h}_{\mathcal{Q},r} = f_\psi(\mathcal{Q}, r, \text{desc}(r))$ using a sentence encoder~\citep{reimers2019sentence}, and pre-compute LLM profile embeddings $\mathbf{h}_{m_\ell} = f_\psi(\text{profile}(m_\ell))$ for each $m_\ell \in \mathbb{M}$. The selection probability over LLMs is:
\begin{equation}
\begin{aligned}
\pi_m(m_\ell | \mathcal{Q}, r) &= \frac{\exp(s_\ell / \tau)}{\sum_{m' \in \mathbb{M}} \exp(s_{m'} / \tau)}, \\
\text{where} \quad s_\ell &= \text{MLP}(\mathbf{h}_{\mathcal{Q},r}, \mathbf{h}_{m_\ell})
\end{aligned}
\label{eq:llm_prob}
\end{equation}
and $\tau$ is the temperature.

\textbf{Role and proposer Pruning via Skip Token.} Critically, $\mathbb{M}$ includes a special \texttt{skip} token that enables automatic pruning at multiple granularities. When \texttt{skip} is selected for a proposer position, that proposer is omitted and incurs zero cost, allowing the system to adaptively reduce the mixture size within a supernode. When \texttt{skip} is selected for a synthesizer position, the \ti{entire supernode} is deactivated, effectively pruning that functional role from the MAS. This unified mechanism enables \ours to jointly learn both the optimal LLM configuration and which roles/proposers are necessary for a given query, without requiring separate pruning modules.

\subsection{Graph Topology Selection}
\label{sec:topology_selection}

To circumvent the challenge of the per-edge credit assignment, we propose treating topology selection as a \ti{holistic graph classification problem} rather than per-edge optimization. Instead of learning edge probabilities directly, we train a graph classifier that scores a pool of topology candidates and selects the most suitable one for each query. This formulation sidesteps edge-level credit assignment by evaluating topologies as indivisible units.

\textbf{Graph Candidate Pool.} We pre-generate a diverse pool of $K$ random directed acyclic graphs (DAGs) $\mathcal{G} = \{G_1, G_2, \ldots, G_K\}$, where each graph $G_k$ represents a potential communication topology. The graphs are sampled with varying edge densities to ensure diversity, covering sparse, medium, and dense connectivity patterns.

\textbf{Graph Classifier.} Given a query $\mathcal{Q}$ and a candidate graph $G_k$ represented by its adjacency matrix $\mathbf{A}_k \in \{0,1\}^{N \times N}$, the classifier predicts a suitability score. We first construct node features by concatenating role embeddings with the query embedding, then apply a Graph Convolutional Network (GCN)~\citep{kipf2017semi} to obtain graph-aware node representations:
\begin{equation}
\mathbf{Z} = \text{GCN}(\mathbf{X}, \mathbf{A}_k), \quad \mathbf{z}_G = \text{Pool}(\mathbf{Z}), \quad s_k = \text{MLP}(\mathbf{z}_G, \mathbf{h}_{\mathcal{Q}}),
\end{equation}
where $\mathbf{X}$ contains node features and $\text{Pool}(\cdot)$ aggregates node representations into a graph-level embedding. At inference, we select $G^* = \arg\max_{G_k \in \mathcal{G}} s_k$.

\subsection{Two-Stage Training Algorithm}
\label{sec:optimization}

Training \ours requires addressing the entangled optimization of supernode configurations and graph topology. We propose a two-stage algorithm that decouples these components to enable more effective learning. The complete procedure is summarized in Algorithm~\ref{alg:training}.

\subsubsection{Stage 1: Supernode Optimization with Random Graphs}

In the first stage, we focus on learning optimal LLM selection within supernodes while using randomly sampled graph topologies. For each training sample, we randomly select a graph $G_k$ from the candidate pool $\mathcal{G}$ and execute the MAS with this fixed topology. This design serves two purposes: (1) it exposes the LLM selector to diverse communication patterns, preventing overfitting to a single topology; and (2) it decouples supernode learning from topology learning, allowing the selector to learn robust LLM assignments that generalize across different graph structures.

\textbf{Multi-Level Reward.}
To address the per-node credit assignment challenge (Challenge 1 in Figure~\ref{fig:mov}), we employ a multi-level reward structure that provides feedback at both the system level and individual supernode level. For each supernode $S_i$, we compute a node-level reward $R_i^{\text{node}}$ by evaluating its synthesizer output against the ground truth. The effective reward combines both signals:
\begin{equation}
R_i^{\text{eff}} = \alpha \cdot R_i^{\text{node}} + (1 - \alpha) \cdot R^{\text{final}},
\label{eq:mixed_reward}
\end{equation}
where $\alpha \in [0,1]$ controls the mixing coefficient and $R^{\text{final}}$ is the reward from the decision node's output. For roles whose outputs are not directly comparable to the final answer (e.g., planners, critics), we set $\alpha = 0$ to rely solely on the final reward signal.

\textbf{Cost-Aware Reward Function.} Each reward (both $R_i^{\text{node}}$ and $R^{\text{final}}$) is computed using a cost-sensitive reward function that balances correctness and computational cost:
\begin{equation}
R(u, C) =
\begin{cases}
\exp(-\lambda \cdot C), & \text{if } u = 1 \\
-\exp(\lambda \cdot C), & \text{if } u = -1
\end{cases}
\label{eq:cost_reward}
\end{equation}
where $u \in \{-1, 1\}$ is the utility (correctness), $C$ is the token cost, and $\lambda$ is the cost sensitivity hyperparameter. This formulation encourages the policy to select cost-efficient LLMs: correct solutions yield positive rewards that decrease with cost, while failures incur negative rewards that grow more severe with higher cost.

\textbf{Stage 1 Training Objective.} The total loss for Stage 1 combines policy gradient with entropy regularization to encourage exploration:
\begin{equation}
\mathcal{L}_{\text{stage1}} = -\sum_{i=1}^{N} \log \pi_m(\mathbf{m}_i | \mathcal{Q}, r_i) \cdot R_i^{\text{eff}} - \lambda_H \sum_p H(\pi_m^{(p)}),
\label{eq:stage1_loss}
\end{equation}
where $H(\pi_m^{(p)}) = -\sum_{m_\ell} \pi_m(m_\ell) \log \pi_m(m_\ell)$ is the entropy of LLM selection at position $p$, preventing premature convergence to suboptimal configurations.

\subsubsection{Stage 2: Graph Classifier Training}
After Stage 1 converges, we freeze the LLM selector and train the graph classifier. This stage generates labeled training data by executing the fixed MAS with different graph topologies.

\textbf{Data Generation.} For each task $\mathcal{Q}$ in the training set, we sample $M$ random graphs from $\mathcal{G}$ and execute the MAS with each topology using the frozen LLM selector. We record the reward $R_k$ for each graph $G_k$, and label the top-performing graphs (with positive reward) as positive examples.

\textbf{Classifier Training.} We train the graph classifier using binary cross-entropy loss:
\begin{equation}
\mathcal{L}_{\text{stage2}} = \text{BCE}(y_k, \sigma(s_k)),
\label{eq:stage2_loss}
\end{equation}
where $y_k$ is the label, $\sigma(\cdot)$ is the sigmoid function.

This two-stage approach offers several advantages: (1) it avoids the credit assignment problem in edge-level optimization by treating topologies holistically; (2) it leverages the optimized LLM selector to generate meaningful training signals for the graph classifier; and (3) it enables efficient inference by simply scoring pre-generated graph candidates rather than sampling edges stochastically.

\subsection{Theoretical Analysis}
\label{sec:theory_informal}
We provide theoretical justification for our two-stage design in addressing the credit assignment challenges.

\begin{theorem}
\label{thm:informal}
Consider optimizing a multi-agent system with $N$ supernodes and a communication graph $G \in \mathcal{G}$.
\begin{enumerate}[label=(\roman*)]
    \item \textbf{(Per-node credit assignment)} Under final-reward-only training, when a failing supernode's error is compensated by other agents, its policy gradient points in the wrong direction. Multi-level rewards with sufficient weight on node-level feedback ensure the gradient sign matches the desired update direction.
    \item \textbf{(Per-edge credit assignment)} Per-edge policy gradient optimization incurs an irreducible error rate of $\Omega((1-\rho)q)$, where $\rho$ is the fraction of optimal edges and $q$ is the probability of high final reward. In contrast, holistic graph scoring with random graph generation reduces this to a vanishing estimation error $\mathcal{O}(1/\sqrt{N_{\text{samples}}})$, enabling correct topology identification with sufficient samples.
\end{enumerate}
\end{theorem}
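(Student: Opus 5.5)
Part (i) and part (ii) are handled separately, and each reduces to a sign computation for a REINFORCE-style update.

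\textbf{Part (i).} I would make the scenario precise: supernode $S_i$ picks a configuration $\mathbf{m}_i$ whose synthesizer output is wrong, so $u_i^{\text{node}}=-1$, while downstream agents correct it and the decision node is right, so $u^{\text{final}}=1$. By the cost-aware reward in \eqref{eq:cost_reward}, $R^{\text{final}}=\exp(-\lambda C)>0$. The REINFORCE term in \eqref{eq:stage1_loss} is $\nabla_\theta\log\pi_m(\mathbf{m}_i\mid\mathcal{Q},r_i)\cdot R_i^{\text{eff}}$, so under final-reward-only training ($\alpha=0$) its sign is that of $R^{\text{final}}>0$. The update therefore increases the probability of the failing configuration, which is the wrong direction. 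Substituting \eqref{eq:mixed_reward} with $R_i^{\text{node}}=-\exp(\lambda C_i)$ gives
\[
R_i^{\text{eff}}=-\alpha e^{\lambda C_i}+(1-\alpha)e^{-\lambda C}.
\]
This is negative iff $\alpha>\alpha^*:=e^{-\lambda C}/(e^{-\lambda C}+e^{\lambda C_i})$, and $\alpha^*<1/2$ whenever the costs are nonnegative. The same substitution in the symmetric case (node correct, final wrong) shows that $\alpha$ above the analogous threshold gives the node a positive update. This makes precise what ``sufficient weight'' means. The entropy term only rescales the update, so I would state the result for the policy-gradient component, or for $\lambda_H$ small.

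\textbf{Part (ii), setup and lower bound.} I would model per-edge optimization as follows. Each candidate edge $e$ has a Bernoulli logit updated by $\nabla\log p(e)\cdot R^{\text{final}}$. A fraction $\rho$ of candidate edges are ``optimal''. The final reward is high with probability $q$, and, as the entanglement assumption, this is independent of whether any particular suboptimal edge is included, because other agents compensate. An \emph{error event} is an update that reinforces a suboptimal edge. A uniformly chosen edge is suboptimal with probability $1-\rho$, and given that, it receives a positive-sign update with probability at least $q$ because the reward does not depend on it. So the per-update error probability is at least $(1-\rho)q$. This bound does not depend on the number of samples, which is what ``irreducible'' means here. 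I would also note that the expected gradient for such an edge is zero under independence, so there is no drift to correct the error.

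\textbf{Part (ii), holistic scoring.} Let $\mu_k=\mathbb{E}[R\mid G_k]$, and assume a gap $\Delta=\mu_{k^*}-\max_{k\ne k^*}\mu_k>0$. Rewards are bounded, $|R|\le e^{\lambda C_{\max}}$, so Hoeffding gives empirical means within $\mathcal{O}(\sqrt{\log(K/\delta)/N_{\text{samples}}})$ of every $\mu_k$ simultaneously, by a union bound over the $K$ candidates. Once this deviation is below $\Delta/2$, the argmax is correct. Random generation of the pool makes $G_{k^*}$, or a near-optimal graph, present with high probability when $K$ is large. The classifier in \eqref{eq:stage2_loss} is treated as a consistent estimator of the ordering of these means. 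I would state that consistency explicitly as an assumption (realizability of the score ranking) rather than prove it.

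\textbf{Main obstacle.} The hard step is making the $\Omega((1-\rho)q)$ claim rigorous rather than tautological. The lower bound depends on modeling ``compensation'' as independence between edge inclusion and reward, and the statement's notion of error rate must be fixed to a per-update sign error. I would first try that formalization and check that the bound cannot be improved by averaging. It cannot: the zero-mean gradient means the sign errors never cancel toward the optimal edge set.
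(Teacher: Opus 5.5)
Your proposal is correct and follows the paper's argument in all three pieces: the sign threshold $\alpha>R^{\text{final}}/(R^{\text{final}}-R_j^{\text{node}})$, the independence-based count of reinforced non-optimal edges, and Hoeffding with a union bound over the $K$ candidates under a reward gap. Your substitution of the cost-aware reward is a genuine improvement: using $R^{\text{node}}\le -1\le -R^{\text{final}}$ gives $\alpha^*\le 1/2$ (strictly when $\lambda(C+C_i)>0$), which justifies $\alpha=0.5$ more cleanly than the paper's corollary does.

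One inconsistency needs repair. Your ``at least $q$'' per-update error rate holds in the paper's model, where only sampled edges receive $\nabla_\phi\log\pi_\phi(e)\cdot R^{\text{final}}$. In that model, however, the expected drift is $\nabla_\phi\pi_\phi(e)\,\mathbb{E}[R^{\text{final}}]$, not zero; it is positive when $\mathbb{E}[R^{\text{final}}]>0$, which only strengthens your irreducibility point. Zero drift holds only under full Bernoulli REINFORCE, and there the error rate is $pq+(1-p)(1-q)$ rather than at least $q$. The bound $\Omega((1-\rho)q)$ survives in either model, but you should commit to one.
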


Intuitively, multi-level rewards prevent the masking effect where system-level success hides individual failures, while graph-level scoring transforms the ill-posed per-edge credit assignment into a well-posed estimation problem. The formal proofs are deferred to Appendix~\ref{sec:appendix_theory}.

\definecolor{headerblue}{RGB}{68,114,196}
\definecolor{lightblue}{RGB}{217,225,242}
\definecolor{lightgray}{RGB}{242,242,242}
\newcommand{\cmark}{\textcolor{green!70!black}{\ding{51}}}
\newcommand{\xmark}{\textcolor{red!70!black}{\ding{55}}}

\begin{table*}[t]
\centering
\caption{Main results on three benchmarks. We report accuracy (\%) for each dataset. \cmark{} indicates the method uses the corresponding component: \textbf{Multi} = multi-agent, \textbf{Topo} = optimizing topology, \textbf{Role} = optimizing roles and LLMs in each role, \textbf{Node} = optimizing intra-node configuration. Best results are in \textbf{bold}, second best are \underline{underlined}.}
\label{tab:main_results}
\resizebox{\textwidth}{!}{%
\begin{tabular}{l|cccc|cc|cc|cc|c}
\toprule
\rowcolor{headerblue}
& & & & &
\multicolumn{2}{c|}{\textcolor{white}{\textbf{HumanEval++}}} &
\multicolumn{2}{c|}{\textcolor{white}{\textbf{MATH}}} &
\multicolumn{2}{c|}{\textcolor{white}{\textbf{MMLU-Redux}}} & \\
\rowcolor{headerblue}
\multirow{-2}{*}{\textcolor{white}{\textbf{Method}}} &
\multirow{-2}{*}{\textcolor{white}{\textbf{Multi}}} &
\multirow{-2}{*}{\textcolor{white}{\textbf{Topo}}} &
\multirow{-2}{*}{\textcolor{white}{\textbf{Role}}} &
\multirow{-2}{*}{\textcolor{white}{\textbf{Node}}} &
\textcolor{white}{GPT-5-Mini} & \textcolor{white}{Qwen3-80B} &
\textcolor{white}{GPT-5-Mini} & \textcolor{white}{Qwen3-80B} &
\textcolor{white}{GPT-5-Mini} & \textcolor{white}{Qwen3-80B} &
\multirow{-2}{*}{\textcolor{white}{\textbf{Avg.}}} \\
\midrule
\rowcolor{lightgray}
Base & \xmark & \xmark & \xmark & \xmark & 89.06 & 84.14 & 77.78 & 74.44 & 92.00 & 82.40 & 83.14 \\
CoT & \xmark & \xmark & \xmark & \xmark & 87.50 & 85.94 & 92.22 & 90.00 & 93.60 & 89.60 & 89.81 \\
\rowcolor{lightgray}
Self-Consistency & \cmark & \xmark & \xmark & \xmark & 89.06 & 87.50 & 93.33 & 91.11 & \underline{94.40} & 83.20 & 89.77 \\
Self-Consistency+CoT & \cmark & \xmark & \xmark & \xmark & 90.62 & 85.94 & 94.44 & 92.22 & 93.60 & \textbf{92.80} & 91.60 \\
\midrule
\rowcolor{lightgray}
LLM-Debate & \cmark & \xmark & \xmark & \xmark & 87.50 & 87.50 & 94.44 & 94.44 & 92.80 & \underline{92.00} & 91.45 \\
Full-Graph & \cmark & \xmark & \xmark & \xmark & 89.06 & 92.19 & \underline{95.56} & \textbf{96.67} & \underline{94.40} & 88.80 & \underline{92.78} \\
\rowcolor{lightgray}
Random-Graph & \cmark & \xmark & \xmark & \xmark & 85.94 & 92.19 & 93.33 & 94.44 & 91.20 & 88.00 & 90.85 \\
\midrule
AFlow & \cmark & \cmark & \xmark & \xmark & \underline{95.31} & \textbf{98.44} & \underline{95.56} & 84.44 & 91.20 & 91.20 & 92.69 \\
\rowcolor{lightgray}
GDesigner & \cmark & \cmark & \xmark & \xmark & 90.62 & 93.75 & 91.11 & 87.77 & 92.00 & 88.80 & 90.68 \\
MASRouter & \cmark & \cmark & \cmark & \xmark & \textbf{96.88} & \textbf{98.44} & 91.11 & 88.88 & 88.33 & 81.67 & 90.89 \\
\midrule
\rowcolor{lightblue}
\textbf{Ours} & \cmark & \cmark & \cmark & \cmark & 93.75 & \underline{96.88} & \textbf{96.67} & \underline{95.56} & \textbf{95.20} & 89.60 & \textbf{94.61} \\
\bottomrule
\end{tabular}%
}
\end{table*}

\section{Experiments}
\label{sec:experiments}

\subsection{Experimental Setup}
\label{subsec:setup}

\paragraph{Datasets and Metrics.} We evaluate our approach on three diverse benchmark datasets to comprehensively assess its performance across different task types: 
\textbf{(1) HumanEval++}~\citep{liu2023your}: An enhanced version of the original HumanEval benchmark~\citep{chen2021evaluating}, featuring more robust evaluation via improved test suites for function-implementation tasks in code generation. 
\textbf{(2) MATH}~\citep{hendrycks2021measuring}: A mathematical reasoning benchmark containing challenging high school competition problems requiring multi-step reasoning. \textbf{(3) MMLU-Redux}~\citep{gema2025mmlu}: A decontaminated and disambiguated subset of the original MMLU benchmark~\citep{hendrycks2020measuring}, covering 30 subjects across STEM, humanities, and social sciences. 

For HumanEval++, we report Pass@1, and for MATH and  MMLU-Redux, we report accuracy.

\paragraph{Baselines.} We compare against a comprehensive set of baselines spanning single-agent and multi-agent approaches: \textbf{Single-agent methods}: Base (direct prompting), Chain-of-Thought (CoT)~\citep{wei2022chain}. \textbf{Fixed multi-agent methods}: Self-Consistency~\citep{wang2022selfconsistency}, Self-Consistency+CoT, LLM-Debate~\citep{du2024improving,liang2024encouraging}, Full-Graph (fully-connected topology with all agents communicating), and Random-Graph (randomly generated communication topology). \textbf{Learning-based multi-agent methods}: AFlow~\citep{zhang2024aflow}, which optimizes workflows using Monte Carlo Tree Search; GDesigner~\citep{zhang2024g}, which uses GNN-based topology optimization; and MASRouter~\citep{yue2025masrouter}, which learns to route between predefined topologies.

\paragraph{LLM Pools.} We use a diverse pool of LLMs for heterogeneous agent assignment: Qwen3-8B~\citep{qwen3}, Qwen3-Next-80B-A3B-Instruct~\citep{qwen3}, DeepSeek-R1-Distill-Qwen-14B~\citep{deepseekr1}, Llama-3.1-8B-Instruct~\citep{llama3}, DeepSeek-V3.2~\citep{deepseekv3}, and Gemma-3-27B-IT~\citep{gemma3}, GPT-5-Mini~\citep{gpt5}, GPT-5-Nano~\citep{gpt5} and GPT-4o-Mini~\citep{gpt4o}. All models are accessible via open APIs, supporting reproducibility of our experiments. We conduct experiments with two settings: GPT-5-Mini and Qwen3-Next-80B-A3B-Instruct. For the GPT-5-Mini setting, methods that support learnable LLM selection (\ours and MASRouter) use all LLMs in the pool; for the Qwen3-80B setting, these methods exclude all GPT models from the pool. For other methods, GPT-5-Mini and Qwen3-Next-80B-A3B-Instruct are utilized respectively.

\paragraph{Implementation Details.} Details on implementations of \ours are deferred to Appendix~\ref{sec:appendix_impl}.

\subsection{Main Results}
\label{subsec:main_results}

Table~\ref{tab:main_results} presents the main results comparing our approach against all baselines. Our method achieves the best average performance of \textbf{94.61\%}, outperforming AFlow and MAS with full graph by a large margin. Notably, our approach demonstrates consistent improvements across all three benchmarks, achieving the best performance on HumanEval++ and MATH, and competitive results on MMLU-Redux.
\textcolor{purple}{\ding{172}} Full-Graph achieves 92.78\% average accuracy by enabling all agents to communicate, but incurs significant computational overhead, for instance, on MMLU-Redux with GPT-5-Mini, our method costs \$1.29 while Full-Graph requires \$4.23 (\textbf{3.27$\times$ more expensive}). AFlow optimizes workflows via MCTS with Claude 3.5-Sonnet~\citep{anthropic2024claude35sonnet}, resulting in prohibitive training costs: on HumanEval++ with GPT-5-Mini, our training cost is 18.41x cheaper than AFlow (More training cost is included in Appendix~\ref{sec:appendix_exp}). 
\textcolor{purple}{\ding{173}} LLM-Debate and Random-Graph rely on fixed debate patterns or random structures, trailing our approach by over 3\%. 
\textcolor{purple}{\ding{174}} GDesigner and MASRouter both employ learning-based topology optimization. However, MASRouter routes between only 4 predefined regular topologies (e.g., debate, chain), constraining its search space. Our method instead learns to score and select from diverse random graph structures, discovering more effective collaboration strategies that better exploit heterogeneous LLM capabilities. This gap also highlights the importance of learned topologies, consistent with recent findings that irregular topologies outperform regular ones~\citep{qian2025scaling}.

\subsection{In-depth Analysis}
\label{subsec:analysis}

\begin{table}[t]
\centering
\caption{Ablation study on MATH and MMLU-Redux benchmarks. We evaluate the contribution of graph topology scoring and LLM selection. Acc (\%) and Cost (USD) are reported.}
\label{tab:ablation}
\begin{tabular}{l|cc|cc}
\toprule
\rowcolor{headerblue}
&
\multicolumn{2}{c|}{\textcolor{white}{\textbf{MATH}}} &
\multicolumn{2}{c}{\textcolor{white}{\textbf{MMLU-Redux}}} \\
\rowcolor{headerblue}
\multirow{-2}{*}{\textcolor{white}{\textbf{Variant}}} & \textcolor{white}{\textbf{Acc}} & \textcolor{white}{\textbf{Cost}}
& \textcolor{white}{\textbf{Acc}} & \textcolor{white}{\textbf{Cost}} \\
\midrule
w/o Graph & 93.33 & 1.57 & 92.00 & 1.36 \\
\rowcolor{lightgray}
w/o LLM Selection & \textbf{97.78} & 2.56 & 94.40 & 3.92 \\
\midrule
\rowcolor{lightblue}
\textbf{Ours} & 96.67 & \textbf{1.52} & \textbf{95.20} & \textbf{1.29} \\
\bottomrule
\end{tabular}
\end{table}

We conduct more experimental analysis in this section, using GPT-5-Mini setting. Additional experimental results are demonstrated in Appendix~\ref{sec:appendix_exp}, including: (1) The effect of \#Proposers in supernodes, and (2) The effect of the size of graph pool $\mathcal{G}$.

\textbf{Ablation Study.} We conduct ablation experiments to evaluate the contribution of each component in our framework, as shown in Table~\ref{tab:ablation}. We consider two variants: (1) \textit{w/o Graph}, which removes the graph scoring mechanism and instead uses three randomly sampled graphs with averaged outputs; (2) \textit{w/o LLM Selection}, which uses the learned optimal graph structure $\mathcal{G}^*$ but assigns \textbf{all agents the strongest LLM backbone} (i.e., GPT-5-Mini).

Removing the graph scoring mechanism (\textit{w/o Graph}) leads to substantial performance degradation on both benchmarks: \tb{3.45\%} drop on MATH and \tb{3.36\%} drop on MMLU-Redux, demonstrating the importance of learned topology selection. For \textit{w/o LLM Selection}, using only the learned structure $\mathcal{G}^*$ with \textbf{a fixed strongest LLM (GPT-5-Mini)} yields marginal improvement on MATH, but at the expense of significantly higher cost (\$2.56 vs. \$1.52). On MMLU-Redux, using GPT-5-Mini for all nodes incurs 203.9\% more cost while achieving lower accuracy compared to \ours. This demonstrates that our heterogeneous LLM assignment strategy in the supernode, mixing weaker and stronger models based on role requirements, achieves better cost-performance trade-offs than uniformly using the strongest LLM in the MAS.

\begin{figure*}[t]
\centering
\scalebox{0.98}{%
\begin{subfigure}[b]{0.61\textwidth}
    \centering
    \includegraphics[width=\textwidth,height=4.5cm,keepaspectratio]{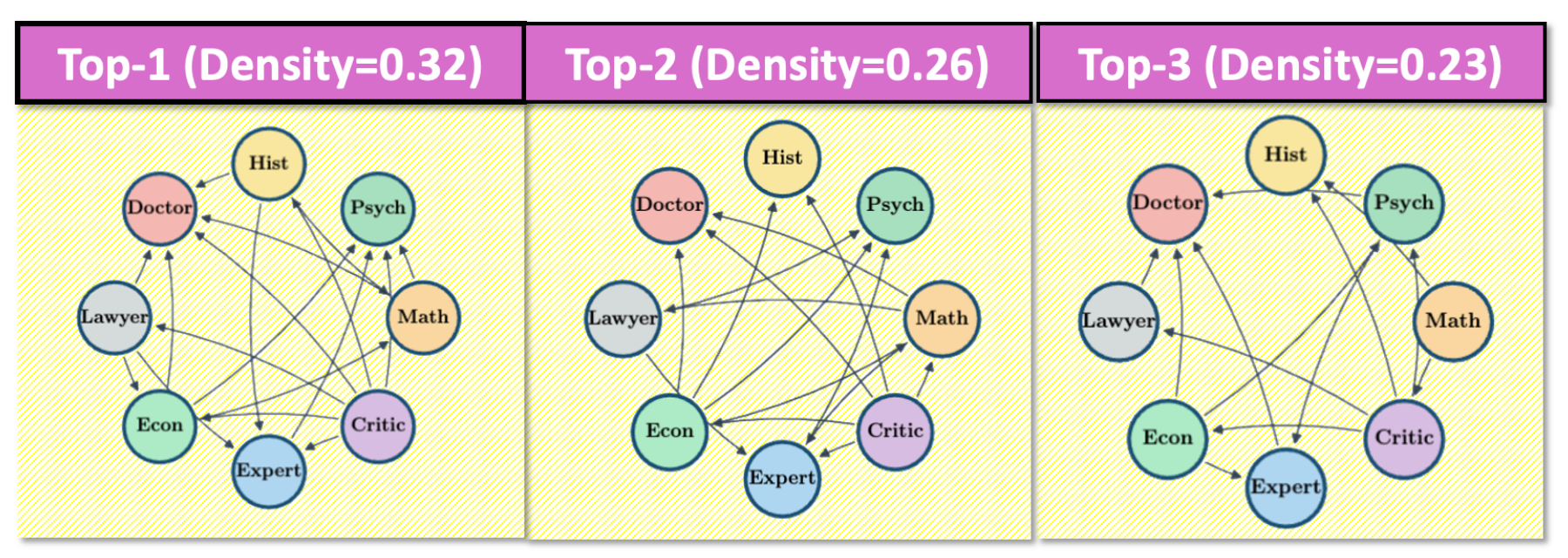}
    \caption{Top-3 learned graph topologies.}
    \label{fig:mmlu_topk}
\end{subfigure}
\hfill
\begin{subfigure}[b]{0.37\textwidth}
    \centering
    \includegraphics[width=\textwidth,height=4.0cm,keepaspectratio]{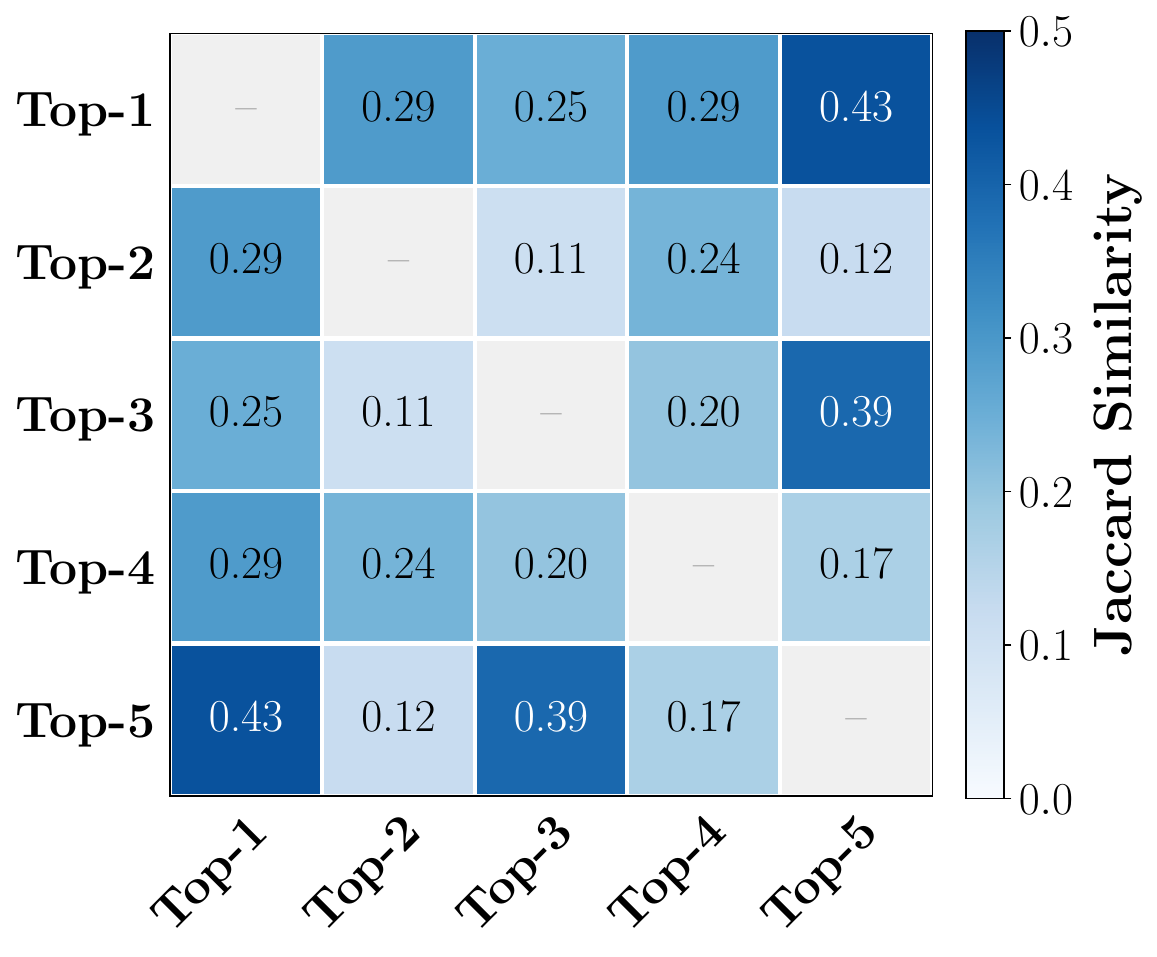}
    \caption{Pairwise Jaccard similarity of top-5 graphs.}
    \label{fig:mmlu_jaccard}
\end{subfigure}%
}
\caption{Analysis of learned topologies on MMLU-Redux. (a) Visualization of the top-3 most frequently selected graph structures with their density. (b) Pairwise Jaccard similarity between top-5 graphs, showing low structural overlap.}
\label{fig:mmlu_topology}
\end{figure*}

\textbf{Analysis on Learned Topology in MMLU-Redux.} We visualize the most frequently selected communication topologies learned by \ours on MMLU-Redux under GPT-5-Mini setting in Figure~\ref{fig:mmlu_topology}. The pairwise Jaccard similarity between the top-5 graphs ranges from 0.11 to 0.44, indicating substantial structural diversity in the learned topologies. Despite this diversity, we identify several consistent patterns across the top-ranked graphs.

\textit{Common structural properties:} (1) \textbf{Sink nodes}: The Psychologist and Doctor roles consistently serve as sink nodes~(high in-degree, zero out-degree) across most top graphs, receiving information from multiple sources but not propagating further. This suggests these roles function as final synthesizers or decision-makers. (2) \textbf{Source nodes}: The Critic role consistently acts as a primary source node~(high out-degree, low in-degree), broadcasting information to multiple agents. The Economist also frequently serves as a hub with high out-degree. (3) \textbf{Sparse and irregular structure}: The density ratios range from 0.23 to 0.32, indicating that the learned topologies are considerably sparser than a fully-connected graph (density=1.0).

\begin{figure}[t]
\centering
    \includegraphics[width=0.48\textwidth]{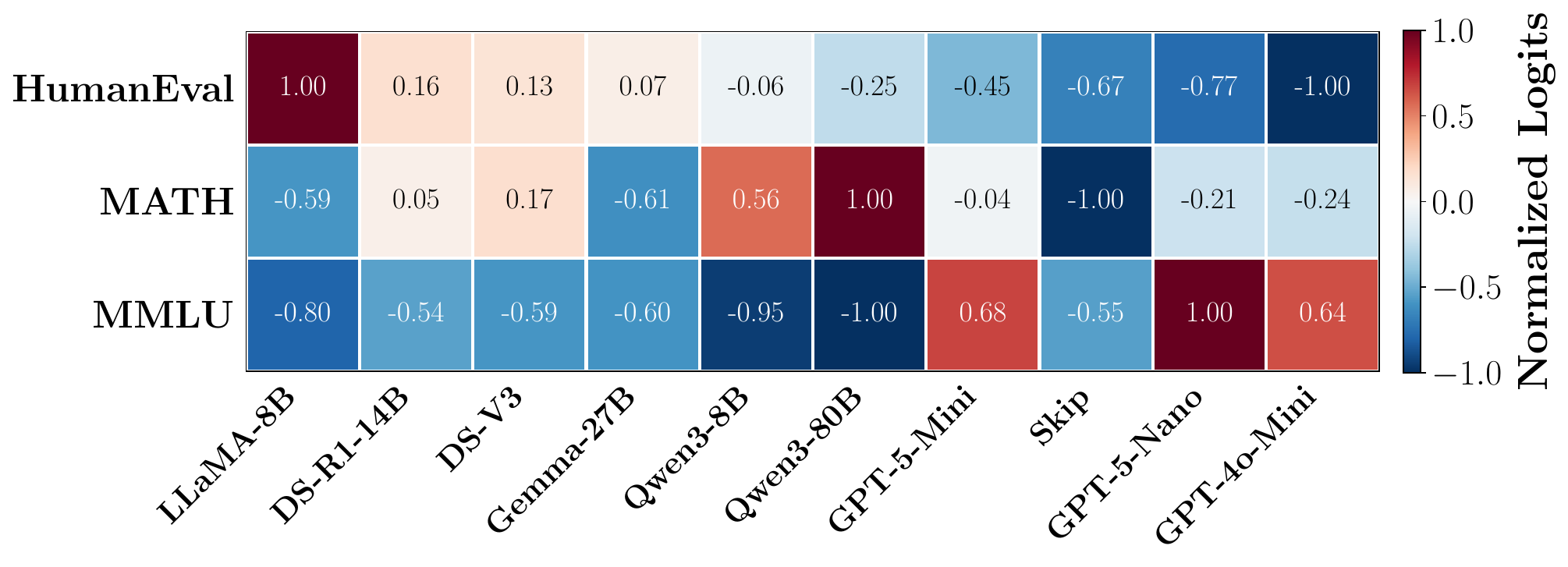}
    \caption{Dataset-level LLM selection preferences learned by \ours. Normalized Logits indicate selection preference, with higher values indicating stronger preference.}
    \label{fig:llm_preference}
\end{figure}

\textbf{Analysis on Intra-Node Configuration.} We analyze the dataset-level LLM selection preferences $\pi_m$ learned by \ours for proposer and synthesizer positions within supernodes, as shown in Figure~\ref{fig:llm_preference}. The normalized logits reveal that \ours learns distinct LLM preferences tailored to different task characteristics.
For proposer nodes, although the preferred LLMs vary across tasks, they consistently form \tb{strong-weak combinations} that balance capability and cost. On HumanEval++, Llama-3.1-8B-Instruct, DeepSeek-R1-Distill-Qwen-14B, and DeepSeek-V3.2 emerge as the most preferred models. For MATH, Qwen3-Next-80B-A3B-Instruct and Qwen3-8B dominate the selection, while MMLU-Redux favors GPT-5-Nano and GPT-5-Mini. These patterns reveal task-specific model strengths: the GPT-5 series excels on general knowledge tasks requiring broad coverage, whereas Qwen3-Next-80B-A3B-Instruct offers superior cost-effectiveness for mathematical reasoning. For synthesizer nodes, we observe highly consistent preferences with proposer nodes, suggesting that LLM selection is primarily driven by task characteristics rather than positional roles within the supernode architecture. Notably, the \texttt{skip} token is rarely selected across all tasks, indicating that \ours prefers to reduce costs through sparser communication topologies (density 0.23-0.32) rather than removing functional roles from the MAS.

\begin{table}[t]
\centering
\caption{Generalization performance (\%) on unseen MMLU categories. Best in \textbf{bold}, second best \underline{underlined}.}
\label{tab:unseen_mmlu}
\begin{tabular}{l|cc}
\toprule
\rowcolor{headerblue}
\textcolor{white}{\textbf{Method}} &
\textcolor{white}{\textbf{GPT-5-Mini}} &
\textcolor{white}{\textbf{Qwen3-80B}} \\
\midrule
SC+CoT & 68.00 & 64.00 \\
\rowcolor{lightgray}
AFlow & \textbf{72.00} & \underline{68.00} \\
GDesigner & \textbf{72.00} & \underline{68.00} \\
\rowcolor{lightgray}
MASRouter & 56.00 & \underline{68.00} \\
\midrule
\rowcolor{lightblue}
\textbf{Ours} & \underline{68.00} & \textbf{72.00} \\
\bottomrule
\end{tabular}
\end{table}

\textbf{Generalization Analysis on MMLU-Redux.} We evaluate out-of-domain generalization by reserving 5 subjects from MMLU-Redux that are excluded from both training and in-domain testing, using the remaining 25 subjects for training, as shown in Table~\ref{tab:unseen_mmlu}. GDesigner and AFlow, despite being learning-based methods, demonstrate competitive generalization, comparable to or exceeding SC+CoT. This may be attributed to their use of the strongest available LLM during execution without learning, which provides an implicit bias toward generalization. In contrast, MASRouter and \ours involve more extensive learning components (LLM/Role selection and topology optimization), posing greater challenges for out-of-domain generalization. Notably, MASRouter exhibits significant degradation under GPT-5-Mini setting, as its most frequently selected LLMs are Qwen3-Next-80B-A3B-Instruct, Qwen3-8B, and Gemma-3-27B, models that may not generalize well to unseen subjects. In comparison, \ours maintains performance by preferring GPT-5-Mini and GPT-5-Nano, which are more amenable to generalization across diverse subjects due to their broad pretraining coverage. Additionally, we hypothesize that the learned communication topologies in \ours, with Psychologist and Doctor as sink nodes and Critic as the source node, represent more generalizable collaboration patterns that transfer effectively to unseen domains.


\begin{table}[t]
\centering
\caption{Cost comparison~(\$). Lowest cost in \textbf{bold}.}
\label{tab:cost_compare}
\begin{tabular}{l|ccc}
\toprule
\rowcolor{headerblue}
\textcolor{white}{\textbf{Method}} &
\textcolor{white}{\textbf{HumanEval++}} &
\textcolor{white}{\textbf{MATH}} &
\textcolor{white}{\textbf{MMLU}} \\
\midrule
GDesigner & 1.33 & 2.16 & 3.66 \\
\rowcolor{lightgray}
MASRouter & \textbf{0.12} & \textbf{0.32} & \textbf{0.15} \\
\midrule
\rowcolor{lightblue}
\textbf{Ours} & 0.53 & 1.52 & 1.29 \\
\bottomrule
\end{tabular}
\end{table}

\textbf{Cost Analysis.} We compare inference costs with MASRouter and GDesigner across the three benchmarks, as shown in Table~\ref{tab:cost_compare}. MASRouter achieves the lowest cost, which can be attributed to two factors: (1) its reward function explicitly incorporates cost penalties, and (2) following their original implementation, we constrain the maximum number of agents to 6, resulting in aggressive role pruning. In contrast, \ours with \texttt{skip} action does not impose such hard constraints; instead, it learns to reduce costs through sparser communication topologies rather than pruning agents, achieving a better trade-off between cost and performance. Compared to GDesigner, which incurs the highest cost, the difference stems from its lack of LLM and role selection: it defaults to using the strongest (and most expensive) model for all agents. Furthermore, GDesigner requires 3 rounds of agent communication, whereas \ours completes inference in a single round, further contributing to the cost reduction.


\section{Related Work}
\label{related_work}
\textbf{Multi-Agent Systems.}
LLM-based multi-agent systems have emerged as a powerful paradigm, with foundational frameworks enabling agent collaboration through conversation programming, standardized operating procedures, and dynamic composition~\citep{li2023camel,hong2024metagpt,qian2024chatdev,chen2024agentverse}.
More recently, Mixture-of-Agents~\citep{wang2024mixture} introduces a layered architecture exploiting the collaborativeness phenomenon where LLMs generate better responses when provided with outputs from other models. Our work builds upon this insight by incorporating intra-node LLM mixtures within a broader MAS framework, while jointly optimizing communication topology and role configurations.

\textbf{Optimizing Multi-Agent Systems.}
Recent research has focused on optimizing MAS from multiple perspectives: communication topology optimization via graph-based representations and reinforcement learning~\citep{zhuge2024gptswarm,zhang2024g,zhang2025agentprune,qian2024macnet}, dynamic role assignment with agent importance scoring~\citep{liu2023dynamic}, and LLM routing for cost-quality trade-offs~\citep{ong2024routellm,yue2025masrouter}. 
Beyond these directions, AgentVerse~\citep{chen2024agentverse} explores dynamic agent recruitment that adaptively assembles agent teams based on task complexity, while MetaGPT~\citep{hong2024metagpt} introduces structured communication protocols inspired by software engineering workflows to reduce redundant interactions.  While most of these works optimize individual dimensions or focus on single-agent routing, \ours jointly optimizes topology and intra-node LLM configurations for multi-agent collaboration within a unified framework.

\section{Conclusion}
We presented \ours, a framework that unifies intra-node and inter-node collaboration in multi-agent systems. By introducing supernodes with internal LLM mixtures and a two-stage training algorithm, \ours addresses the fundamental credit assignment challenges: multi-level rewards resolve per-node attribution, while holistic graph classification circumvents intractable per-edge credit assignment. Experiments on diverse domains demonstrate that \ours achieves state-of-the-art performance with superior cost-efficiency. We believe that this new paradigm opens promising directions for improving the reasoning ability for more complex, real-world tasks.

\section*{Impact Statement}
This paper presents work on MAS, which are increasingly prevalent across various application domains.  We believe this research contributes positively to the broader field of machine learning and collaborative AI systems. We do not foresee any direct negative societal consequences arising from this work.

\bibliography{multiAgent}
\bibliographystyle{icml2026}

\newpage
\appendix
\onecolumn
\begin{center}
	\LARGE \bf {Appendix}
\end{center}

\section{Notation Table}
\label{sec:appendix_notation}

Table~\ref{tab:notation} summarizes the key notation used throughout this paper.

\begin{table}[H]
\centering
\caption{Summary of notation used in this paper.}
\label{tab:notation}
\scalebox{0.9}{%
\begin{tabular}{@{}cl|cl@{}}
\toprule
\textbf{Symbol} & \textbf{Description} & \textbf{Symbol} & \textbf{Description} \\
\midrule
\multicolumn{4}{l}{\textit{Search Space \& Supernode Structure}} \\
$\mathbb{S}$ & Search space $(\mathbb{M}, \mathbb{R}, \mathbb{G})$ & $S_i$ & Supernode $i$ with role $r_i$ \\
$\mathbb{M}$ & Pool of $N_m$ LLM backbones & $r_i$ & Role for supernode $i$, $r_i \in \mathbb{R}$ \\
$\mathbb{R}$ & Set of $N_r$ predefined agent roles & $W$ & Number of proposers per supernode \\
$\mathbb{G}$ & Space of graph topologies & $m_{i,j}^{(w)}$ & LLM for $j$-th proposer in supernode $i$ \\
$\mathcal{G}$ & Pool of $K$ candidate DAGs & $m_i^{(a)}$ & LLM for synthesizer in supernode $i$ \\
$N$ & Max number of supernodes & & \\
\midrule
\multicolumn{4}{l}{\textit{MDP Components \& Policy}} \\
$\mathcal{X}$ & State space & $\pi_m$ & LLM selection policy \\
$\mathcal{A}$ & Action space & $\pi_\theta$ & Overall policy with parameters $\theta$ \\
$P$ & Transition function & $f_G$ & Graph classifier \\
$R$ & Reward function & $\tau$ & Softmax temperature \\
$\mathcal{Q}$ & Input query & $\mathbf{h}_{\mathcal{Q},r}$ & Query-role embedding \\
$a^*$ & Ground-truth answer & $\mathbf{h}_{m_\ell}$ & LLM profile embedding \\
$\mathbf{E}$ & Adjacency matrix $\in \{0,1\}^{N \times N}$ & & \\
\midrule
\multicolumn{4}{l}{\textit{Rewards}} \\
$R^{\text{final}}$ & Final task reward & $U(\cdot)$ & Utility (correctness) function \\
$R_i^{\text{node}}$ & Node-level reward for supernode $i$ & $C(\cdot)$ & Cost function (token expenditure) \\
$R_i^{\text{eff}}$ & Effective reward (node + final) & $\lambda$ & Cost sensitivity hyperparameter \\
$\alpha$ & Mixing coefficient for multi-level reward & & \\
\midrule
\multicolumn{4}{l}{\textit{Training Objectives}} \\
$\mathcal{L}_{\text{stage1}}$ & Stage 1 loss (LLM selector) & $T_1, T_2$ & Training iterations for Stage 1, 2 \\
$\mathcal{L}_{\text{stage2}}$ & Stage 2 loss (graph classifier) & $\mathcal{D}$ & Training dataset \\
$\lambda_H$ & Entropy regularization coefficient & $\mathcal{D}_G$ & Graph classification dataset \\
$H(\pi_m^{(p)})$ & Entropy at position $p$ & & \\
\midrule
\multicolumn{4}{l}{\textit{Graph Classifier}} \\
$\mathbf{A}_k$ & Adjacency matrix for graph $G_k$ & $\mathbf{z}_G$ & Graph-level embedding \\
$\mathbf{X}$ & Node feature matrix & $s_k$ & Suitability score for $G_k$ \\
$\mathbf{Z}$ & Graph-aware node representations & $G^*$ & Selected optimal graph \\
\bottomrule
\end{tabular}%
}
\end{table}

\section{Algorithm}
\label{sec:appendix_algorithm}

Algorithm~\ref{alg:training} presents the complete two-stage training procedure for \ours.

\begin{algorithm}[h]
\caption{\ours Two-Stage Training}
\label{alg:training}
\begin{algorithmic}[1]
\Require Training set $\mathcal{D}$, graph pool $\mathcal{G}$, LLM pool $\mathbb{M}$
\Ensure Trained LLM selector $\pi_m$, graph classifier $f_G$
\Statex
\Statex \textbf{Stage 1: Supernode Optimization}
\For{iteration $= 1, \ldots, T_1$}
    \State Sample batch $\{\mathcal{Q}_b\}_{b=1}^{B}$ from $\mathcal{D}$
    \For{each query $\mathcal{Q}$ in batch}
        \State Sample random graph $G_k \sim \text{Uniform}(\mathcal{G})$
        \State Sample LLM assignments $\mathbf{m} \sim \pi_m(\cdot | \mathcal{Q})$ 
        \State Execute MAS with topology $G_k$ and LLMs $\mathbf{m}$, obtain output $\hat{a}$
        \State Compute rewards using ground-truth: $\{R_i^{\text{node}}\}$ and $R^{\text{final}}$
        \State Compute effective rewards $R_i^{\text{eff}}$ via Eq.~\eqref{eq:mixed_reward}
    \EndFor
    \State Update $\pi_m$ by minimizing $\mathcal{L}_{\text{stage1}}$ (Eq.~\eqref{eq:stage1_loss})
\EndFor
\Statex
\Statex \textbf{Stage 2: Graph Classifier Training}
\State Freeze LLM selector $\pi_m$
\State $\mathcal{D}_G \gets \emptyset$ \Comment{Graph classification dataset}
\For{each query $\mathcal{Q}$ in $\mathcal{D}$}
    \For{$k = 1, \ldots, M$}
        \State Sample graph $G_k$ from $\mathcal{G}$, execute MAS, compute reward $R_k$
    \EndFor
    \State Label top graphs with $R_k > 0$ as positive; add to $\mathcal{D}_G$
\EndFor
\For{iteration $= 1, \ldots, T_2$}
    \State Update graph classifier $f_G$ on $\mathcal{D}_G$ via Eq.~\eqref{eq:stage2_loss}
\EndFor
\end{algorithmic}
\end{algorithm}

\section{Theoretical Analysis}
\label{sec:appendix_theory}

In this section, we provide formal theoretical analysis justifying our two-stage training design, corresponding to Theorem~\ref{thm:informal} in Section~\ref{sec:theory_informal}. The theorem identifies two fundamental credit assignment challenges in multi-agent systems: (i) per-node credit assignment, where individual agent failures may be masked by system-level success, and (ii) per-edge credit assignment, where the contribution of individual communication links is difficult to isolate. Below, we formalize these challenges and prove how our proposed solutions: multi-level rewards and holistic graph selection to address them.

\textbf{Problem Setting.} We consider multi-agent coordination problems formalized as MDPs with bounded rewards $R \in [-B, B]$, finite state space $\mathcal{X}$, and finite action space $\mathcal{A}$ (corresponding to discrete LLM and topology selections). The communication topology is represented as a directed acyclic graph (DAG) over $N$ supernodes.

Section~\ref{sec:theory_node} provides the formal proof for Theorem~\ref{thm:informal}(i), showing how multi-level rewards guarantee correct gradient direction for all supernodes by preventing the masking effect. Section~\ref{sec:theory_edge} proves Theorem~\ref{thm:informal}(ii), demonstrating that per-edge optimization incurs irreducible credit assignment error, while our holistic graph selection approach reduces this to a vanishing estimation error.

\subsection{Per-Node Credit Assignment: Multi-Level Rewards}
\label{sec:theory_node}

This subsection provides the formal proof for Theorem~\ref{thm:informal}(i), which addresses the per-node credit assignment challenge. As illustrated in Figure~\ref{fig:mov}(a), when using only final task rewards, a failing supernode (e.g., a Reviewer that introduces errors) may receive positive gradient updates if other agents compensate for its mistakes, leading to incorrect policy reinforcement. Our multi-level reward mechanism (Eq.~\eqref{eq:mixed_reward}) addresses this by combining node-level rewards $R_i^{\text{node}}$ with the final reward $R^{\text{final}}$, ensuring that each supernode receives feedback proportional to its individual contribution.

We provide theoretical justification for why multi-level rewards lead to more accurate policy gradients compared to using final rewards alone. The key insight is that when agents have heterogeneous performance, final-reward-only training can push poorly-performing agents' policies in incorrect directions.

\begin{proposition}[Gradient Bias under Final Reward]
\label{prop:gradient_bias}
Consider a multi-agent system with $N$ supernodes, where the policy for selecting LLM configuration at supernode $i$ is parameterized by $\theta_i$. Let $R^{\text{final}}$ denote the final task reward and $R_i^{\text{node}}$ denote the node-level reward for supernode $i$. Define the policy gradient under final reward as:
\begin{equation}
g_i^{\text{final}} = \nabla_{\theta_i} \log \pi_{\theta_i}(m_i | \mathcal{Q}) \cdot R^{\text{final}},
\end{equation}
and the policy gradient under multi-level reward as:
\begin{equation}
g_i^{\text{multi}} = \nabla_{\theta_i} \log \pi_{\theta_i}(m_i | \mathcal{Q}) \cdot R_i^{\text{eff}},
\end{equation}
where $R_i^{\text{eff}} = \alpha \cdot R_i^{\text{node}} + (1-\alpha) \cdot R^{\text{final}}$.

Suppose there exists a ``failed'' supernode $j$ such that $R_j^{\text{node}} < 0$, while the final reward is positive due to other agents compensating for this failure, i.e., $R^{\text{final}} > 0$. Then:
\begin{enumerate}[label=(\roman*)]
    \item Under final reward, the gradient $g_j^{\text{final}}$ reinforces the current (suboptimal) policy for supernode $j$.
    \item Under multi-level reward with $\alpha > \frac{R^{\text{final}}}{R^{\text{final}} - R_j^{\text{node}}}$, the gradient $g_j^{\text{multi}}$ correctly penalizes the suboptimal policy.
\end{enumerate}
\end{proposition}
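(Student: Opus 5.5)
The plan is to reduce both claims to a sign comparison of the scalar multiplier of the common score vector $\mathbf{v}_j := \nabla_{\theta_j} \log \pi_{\theta_j}(m_j \mid \mathcal{Q})$. Both $g_j^{\text{final}}$ and $g_j^{\text{multi}}$ have the form $\mathbf{v}_j \cdot c$ with $c \in \mathbb{R}$. The first step is to fix the meaning of ``reinforces'' and ``penalizes.'' A gradient-ascent step $\theta_j \leftarrow \theta_j + \eta\, \mathbf{v}_j c$ with small $\eta > 0$ changes the log-probability of the sampled configuration $m_j$ by
\[
\eta\, c\, \|\mathbf{v}_j\|^2 + O(\eta^2),
\]
by a first-order Taylor expansion. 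Assuming $\mathbf{v}_j \neq 0$, the step reinforces $m_j$ (raises its probability) iff $c > 0$, and penalizes it iff $c < 0$. Equivalently, the loss in Eq.~\eqref{eq:stage1_loss} decreases along the direction that increases $\log \pi$ exactly when the multiplying reward is positive. If $\mathbf{v}_j = 0$, both gradients vanish and the statement is vacuous, so I will state the nondegeneracy assumption explicitly.

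For part (i), $c = R^{\text{final}} > 0$ by hypothesis. So $g_j^{\text{final}}$ is a positive multiple of $\mathbf{v}_j$, and the update increases $\pi_{\theta_j}(m_j \mid \mathcal{Q})$. Since $R_j^{\text{node}} < 0$ certifies that $m_j$ is suboptimal at node $j$, this is exactly the masking effect.

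For part (ii), $c = R_j^{\text{eff}} = \alpha R_j^{\text{node}} + (1-\alpha) R^{\text{final}}$. This is affine in $\alpha$, equals $R^{\text{final}} > 0$ at $\alpha = 0$, and equals $R_j^{\text{node}} < 0$ at $\alpha = 1$, so it has a unique root. Rearranging, $R_j^{\text{eff}} < 0$ iff
\[
\alpha\,(R^{\text{final}} - R_j^{\text{node}}) > R^{\text{final}}.
\]
The denominator $R^{\text{final}} - R_j^{\text{node}}$ is strictly positive because $R^{\text{final}} > 0 > R_j^{\text{node}}$. Dividing by it gives precisely the threshold in the statement. That threshold lies in $(0,1)$, so the admissible range of $\alpha$ is nonempty within $[0,1]$. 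Then $g_j^{\text{multi}}$ is a negative multiple of $\mathbf{v}_j$, and the update decreases the probability of $m_j$.

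The main obstacle is not algebraic but definitional: making precise ``reinforces/penalizes'' for a single-sample stochastic gradient rather than the expected gradient. I would handle this through the first-order change in $\log\pi_{\theta_j}(m_j\mid\mathcal{Q})$ described above. I would also note that the cost-aware reward of Eq.~\eqref{eq:cost_reward} guarantees the sign of each reward matches correctness, so $R_j^{\text{node}}<0$ indeed encodes failure. Finally, for roles with $\alpha=0$ the conclusion of (ii) does not apply, consistent with the condition.
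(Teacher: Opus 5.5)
Your proposal is correct and takes essentially the same route as the paper: both reduce (i) and (ii) to the sign of the scalar multiplying the score vector $\nabla_{\theta_j}\log\pi_{\theta_j}(m_j\mid\mathcal{Q})$, and both obtain the threshold in (ii) by the same rearrangement of $R_j^{\text{eff}}<0$. Your additions are small gains in rigor over the paper's argument, which blurs the expected and single-sample gradients: the first-order Taylor justification of ``reinforces/penalizes,'' the nondegeneracy assumption on the score vector, and the check that the threshold lies in $(0,1)$.
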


\begin{proof}
We analyze the direction of policy updates for supernode $j$.

\textbf{Part (i): Final Reward Case.}
The policy gradient update rule increases the probability of actions that receive positive reward and decreases the probability for actions with negative reward. Under the REINFORCE estimator~\citep{williams1992simple}, the expected update direction for supernode $j$ is:
\begin{equation}
\mathbb{E}\left[g_j^{\text{final}}\right] = \mathbb{E}\left[\nabla_{\theta_j} \log \pi_{\theta_j}(m_j | \mathcal{Q}) \cdot R^{\text{final}}\right].
\end{equation}

Since $R^{\text{final}} > 0$ by assumption, this gradient will increase the log-probability of the sampled action $m_j$. However, supernode $j$ produced a suboptimal output (as evidenced by $R_j^{\text{node}} < 0$). Therefore, the policy is reinforced in a direction that maintains or increases the probability of selecting the suboptimal configuration, the gradient direction is incorrect with respect to improving supernode $j$'s individual performance.

\textbf{Part (ii): Multi-Level Reward Case.}
Under the multi-level reward, the effective reward for supernode $j$ is:
\begin{equation}
R_j^{\text{eff}} = \alpha \cdot R_j^{\text{node}} + (1-\alpha) \cdot R^{\text{final}}.
\end{equation}

For the gradient to correctly penalize the suboptimal policy, we require $R_j^{\text{eff}} < 0$, which implies:
\begin{align}
\alpha \cdot R_j^{\text{node}} + (1-\alpha) \cdot R^{\text{final}} &< 0 \\
(1-\alpha) \cdot R^{\text{final}} &< -\alpha \cdot R_j^{\text{node}} \\
(1-\alpha) \cdot R^{\text{final}} &< \alpha \cdot |R_j^{\text{node}}| \quad \text{(since } R_j^{\text{node}} < 0\text{)}
\end{align}

Rearranging:
\begin{align}
R^{\text{final}} - \alpha \cdot R^{\text{final}} &< \alpha \cdot |R_j^{\text{node}}| \\
R^{\text{final}} &< \alpha \cdot (R^{\text{final}} + |R_j^{\text{node}}|) \\
R^{\text{final}} &< \alpha \cdot (R^{\text{final}} - R_j^{\text{node}}) \\
\alpha &> \frac{R^{\text{final}}}{R^{\text{final}} - R_j^{\text{node}}}.
\end{align}

When this condition holds, $R_j^{\text{eff}} < 0$, and the gradient $g_j^{\text{multi}}$ will decrease the probability of the suboptimal action, correctly updating the policy in the direction that discourages the failing configuration.
\end{proof}

\begin{corollary}[Sufficient Condition for Gradient with Correct Sign]
\label{cor:sufficient}
If the reward signals are normalized such that $R^{\text{final}}, R_i^{\text{node}} \in [-1, 1]$, then setting $\alpha \geq 0.5$ ensures the gradient sign matches the desired update direction whenever $R_j^{\text{node}} \leq 0$ and $R^{\text{final}} \leq 1$.
\end{corollary}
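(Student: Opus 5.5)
The plan is to reduce the corollary to the threshold condition of Proposition~\ref{prop:gradient_bias}(ii). We split on the sign of $R^{\text{final}}$.

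\textbf{Case $R^{\text{final}} \le 0$.} There is no masking. Then $R_j^{\text{eff}} = \alpha R_j^{\text{node}} + (1-\alpha)R^{\text{final}}$ is a convex combination of two non-positive numbers, so it is non-positive for every $\alpha \in [0,1]$. The REINFORCE update therefore does not increase the probability of the sampled configuration, and this case is settled.

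\textbf{Case $R^{\text{final}} > 0$ (the masking case).} Here Proposition~\ref{prop:gradient_bias}(ii) says the sign is correct exactly when
\begin{equation*}
\alpha > \frac{R^{\text{final}}}{R^{\text{final}} - R_j^{\text{node}}} = \frac{R^{\text{final}}}{R^{\text{final}} + |R_j^{\text{node}}|}.
\end{equation*}
The right-hand side is at most $1/2$ precisely when $|R_j^{\text{node}}| \ge R^{\text{final}}$. So the whole corollary comes down to showing that a failing node's penalty is at least as large in magnitude as any success reward.

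This inequality is the main obstacle. It does not follow from $R^{\text{final}}, R_j^{\text{node}} \in [-1,1]$ and $R_j^{\text{node}} \le 0$ alone. For example, $R_j^{\text{node}} = -\epsilon$ and $R^{\text{final}} = 1$ push the threshold to $1/(1+\epsilon) \to 1$. I would instead derive it from the cost-aware reward of Eq.~\eqref{eq:cost_reward}, which is what the paper's normalization implicitly encodes:
\begin{itemize}
\item a correct outcome gives $R = \exp(-\lambda C) \in (0,1]$;
\item an incorrect outcome gives $R = -\exp(\lambda C) \le -1$.
\end{itemize}
Under the normalization to $[-1,1]$, with failures mapped to the extreme value $-1$, a failed node has $|R_j^{\text{node}}| \ge 1 \ge R^{\text{final}}$ whenever $\lambda, C \ge 0$. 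The threshold is then at most $1/2$, and any $\alpha \ge 1/2$ meets it.

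I will state this explicitly as the reading of ``normalized'' under which the corollary holds. More generally, for a failed node with margin $|R_j^{\text{node}}| \ge \delta > 0$, the argument gives the sufficient condition $\alpha > 1/(1+\delta)$.

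\textbf{Boundary case.} At $\alpha = 1/2$ with $|R_j^{\text{node}}| = R^{\text{final}} = 1$ (zero cost), we get $R_j^{\text{eff}} = 0$. The gradient then vanishes rather than being negative. So with $\alpha \ge 0.5$ the conclusion should be read as ``the gradient does not reinforce the failing configuration,'' that is, $R_j^{\text{eff}} \le 0$. Strict penalization requires $\alpha > 1/2$ or $C > 0$. Taking the expectation of the REINFORCE estimator, as in Part (i) of the proposition, then transfers this sign to the expected update direction.
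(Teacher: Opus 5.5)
Your proposal is correct and follows essentially the paper's route. Both arguments reduce to the threshold $\alpha^* = R^{\text{final}}/(R^{\text{final}} - R_j^{\text{node}})$ of Proposition~\ref{prop:gradient_bias}(ii). The paper's proof likewise finds $\alpha^* = 1$ in the marginal case $R_j^{\text{node}} = 0$, $R^{\text{final}} = 1$, so it too concludes only under the extra condition $|R_j^{\text{node}}| \ge R^{\text{final}}$ that you isolate (its ``complete failure'' case $R_j^{\text{node}} = -1$).

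What you add makes explicit what the paper leaves implicit: you derive that condition from the cost-aware reward in Eq.~\ref{eq:cost_reward}, you handle $R^{\text{final}} \le 0$ separately, and you note that at $\alpha = 1/2$ with $R^{\text{final}} = -R_j^{\text{node}} = 1$ one gets $R_j^{\text{eff}} = 0$, so the endpoint guarantees only non-reinforcement, not strict penalization. Your closing remark that the expectation ``inherits'' the sign is unnecessary, since the claim concerns the per-sample update, and it is no more rigorous than Part~(i) of the proposition.
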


\begin{proof}
The worst case occurs when $R^{\text{final}} = 1$ and $R_j^{\text{node}} = 0$ (marginal failure). The threshold becomes $\alpha^* = \frac{1}{1 - 0} = 1$. However, for any strictly negative $R_j^{\text{node}} < 0$, we have:
\begin{equation}
\alpha^* = \frac{R^{\text{final}}}{R^{\text{final}} - R_j^{\text{node}}} < \frac{R^{\text{final}}}{R^{\text{final}}} = 1.
\end{equation}

More practically, when $R_j^{\text{node}} = -1$ (complete failure) and $R^{\text{final}} = 1$:
\begin{equation}
\alpha^* = \frac{1}{1 - (-1)} = \frac{1}{2} = 0.5.
\end{equation}

Thus, $\alpha \geq 0.5$ ensures the gradient sign matches the desired update direction for all cases where the node-level performance is at least as negative as the final reward is positive, in magnitude.
\end{proof}

This analysis provides theoretical grounding for our multi-level reward design in Eq.~\eqref{eq:mixed_reward}. By incorporating node-level feedback, \ours avoids the pathological case where high-performing agents mask the failures of others, enabling more accurate credit assignment and faster convergence to optimal supernode configurations.

\subsection{Per-Edge Credit Assignment: Holistic Graph Selection}
\label{sec:theory_edge}

This subsection provides the formal proof for Theorem~\ref{thm:informal}(ii), which addresses the per-edge credit assignment challenge. As illustrated in Figure~\ref{fig:mov}(b), when optimizing communication topology via per-edge policy gradients, individual edges receive credit based on the final task reward. However, after Stage 1 optimization, the LLM selector is already well-trained, causing most configurations to achieve positive rewards regardless of topology. This leads to an \emph{irreducible} credit assignment error: non-beneficial edges are incorrectly reinforced simply because they co-occur with successful task completions.

Our holistic graph selection approach (Section~\ref{sec:topology_selection}) circumvents this problem by treating topology selection as a graph classification problem rather than per-edge optimization. By evaluating entire topologies as indivisible units and selecting from a pre-generated candidate pool $\mathcal{G}$, we transform the ill-posed per-edge credit assignment into a well-posed estimation problem with vanishing error.

We now analyze the credit assignment problem in per-edge topology optimization and justify our holistic graph selection approach.

\begin{proposition}[Credit Assignment Error in Per-Edge Optimization]
\label{prop:edge_error}
Consider optimizing graph topology via per-edge policy gradient after the LLM mixture has been optimized (Stage 1 converged). Let $G^* \subseteq \mathcal{E}$ denote the set of edges in the optimal topology, where $\mathcal{E}$ is the set of all possible edges. For each edge $e \in \mathcal{E}$, let $\pi_\phi(e)$ denote the probability of including edge $e$, parameterized by $\phi$. The per-edge policy gradient is:
\begin{equation}
g_e = \nabla_\phi \log \pi_\phi(e) \cdot R^{\text{final}}, \quad \text{if } e \text{ is sampled}.
\end{equation}

\textbf{Assumptions:}
\begin{description}[leftmargin=1.2cm,labelwidth=1cm]
    \item[(A1)] \label{assump:high_reward} After Stage 1 optimization, the final reward is high with probability $q > 0.5$, i.e., $\Pr(R^{\text{final}} > 0) = q$.
    \item[(A2)] \label{assump:edge_prob} Each edge $e$ is sampled independently with probability $p = 0.5$.
\end{description}

\textbf{Justification of Assumptions:}
\begin{description}[leftmargin=1.2cm,labelwidth=1cm]
    \item[(A1)] Stage 1 optimizes the LLM mixture over the expectation of randomly sampled graphs. We posit that the capability of the LLM combination has a larger impact on task success than the communication topology, i.e., strong LLMs can compensate for suboptimal topologies, but weak LLMs cannot be saved by optimal topologies. Therefore, after Stage 1 converges, the optimized LLM selector achieves positive reward on more than half of the instances, i.e., $q > 0.5$.
    \item[(A2)] Since we use randomly generated graphs during Stage 1 training, edges are sampled uniformly at random. For simplicity, we assume $p = 0.5$, corresponding to a uniform edge inclusion probability.
\end{description}

Let $\mathcal{E}^+ = G^*$ denote the set of beneficial edges (those in the optimal topology) and $\mathcal{E}^- = \mathcal{E} \setminus G^*$ denote the set of non-beneficial edges. Define the \emph{gradient direction error} for an edge $e$ as:
\begin{equation}
\text{Error}(e) = \begin{cases}
1 & \text{if } e \in \mathcal{E}^+ \text{ and } g_e < 0 \text{ (beneficial edge incorrectly weakened)} \\
1 & \text{if } e \in \mathcal{E}^- \text{ and } g_e > 0 \text{ (non-beneficial edge incorrectly reinforced)} \\
0 & \text{otherwise (correct gradient direction)}
\end{cases}
\end{equation}

Then the expected fraction of edges with incorrect gradient direction is:
\begin{equation}
\mathbb{E}\left[\frac{1}{|\mathcal{E}|}\sum_{e \in \mathcal{E}} \text{Error}(e)\right] \geq (1-\rho) \cdot p \cdot q,
\end{equation}
where $\rho = \frac{|G^*|}{|\mathcal{E}|}$ is the fraction of edges in the optimal topology.
\end{proposition}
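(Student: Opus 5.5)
The argument is a direct expectation computation that keeps only one class of errors. By linearity of expectation, the expected error fraction equals $\frac{1}{|\mathcal{E}|}\sum_{e\in\mathcal{E}}\Pr(\text{Error}(e)=1)$. Every term is nonnegative, so I will drop the contribution of beneficial edges $e\in\mathcal{E}^+$ and lower-bound the sum over non-beneficial edges only.

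There are $|\mathcal{E}^-| = (1-\rho)|\mathcal{E}|$ such edges. So it suffices to show that for each $e\in\mathcal{E}^-$,
\[
\Pr(\text{Error}(e)=1)\ \ge\ p\,q .
\]

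For a fixed $e\in\mathcal{E}^-$, I will use the event $\{e \text{ sampled}\}\cap\{R^{\text{final}}>0\}$. On this event, $g_e=\nabla_\phi\log\pi_\phi(e)\cdot R^{\text{final}}$ points in the direction that increases the inclusion probability of $e$. This is the sense in which ``$g_e>0$'' is to be read in the definition of Error: the update reinforces $e$. I will state this sign convention explicitly, since for a vector-valued gradient ``$g_e>0$'' is only meaningful as ``ascent direction on $\log\pi_\phi(e)$''. With that reading, $e$ is a non-beneficial edge that is incorrectly reinforced, so Error$(e)=1$ on this event.

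The probability of the event then follows from the assumptions. By (A2), $\Pr(e\text{ sampled})=p$. I will use the fact that, by definition, $\mathcal{E}^-$ is the set of edges not in the optimal topology, and interpret this as saying such edges carry no information about success. The modeling step I need is that, conditional on $e$ being sampled, the final reward is still positive with probability $q$ from (A1). This means treating $R^{\text{final}}$ as independent of the inclusion of a single non-beneficial edge, which matches the justification of (A1) that LLM capability dominates topology.

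This independence is the main obstacle. It is an additional modeling assumption rather than a consequence of (A1) and (A2) as stated. I will state it explicitly as part of the setting. As a weaker alternative, one can note that including a non-beneficial edge does not decrease the success probability, which already gives $\Pr(R^{\text{final}}>0\mid e\text{ sampled})\ge q$ and suffices for the inequality.

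Multiplying then gives $\Pr(\text{Error}(e)=1)\ge p q$. Summing over $e\in\mathcal{E}^-$ and dividing by $|\mathcal{E}|$ yields the bound $(1-\rho)pq$. With $p=0.5$ this is $\Omega((1-\rho)q)$, as claimed in Theorem~\ref{thm:informal}(ii).

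Finally, I will remark that the bound does not shrink with more samples. This is because $\Pr(R^{\text{final}}>0)=q>0.5$ is fixed after Stage 1, which justifies calling the error irreducible.
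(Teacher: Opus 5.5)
Your proposal is correct and follows the paper's argument: linearity of expectation, with $\Pr(\text{Error}(e)=1)=pq$ for each $e\in\mathcal{E}^-$ coming from the event $\{e\text{ sampled}\}\cap\{R^{\text{final}}>0\}$, and the same reading of ``$g_e>0$'' as ascent on $\log\pi_\phi(e)$. The paper also computes the $\mathcal{E}^+$ term, obtaining $p\,[q(1-2\rho)+\rho]=pq(1-\rho)+p\rho(1-q)$, and then reaches the bound through a loose $\rho\ll 1$ approximation, whereas dropping that nonnegative term as you do gives the inequality exactly; the paper also uses the independence of $R^{\text{final}}$ from the inclusion of $e$ silently (writing $\Pr(e\text{ sampled})\cdot\Pr(R^{\text{final}}>0)$), while you correctly flag it as an extra modeling assumption (your monotonicity variant is likewise an assumption, not a consequence of $e\notin G^*$).
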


\begin{proof}
We analyze the gradient direction for edges in $\mathcal{E}^+$ and $\mathcal{E}^-$ separately.

\textbf{Case 1: Non-beneficial edges ($e \in \mathcal{E}^-$).}

Consider an edge $e \in \mathcal{E}^-$ that is sampled (included in the current graph $G$). The gradient update is:
\begin{equation}
g_e = \nabla_\phi \log \pi_\phi(e) \cdot R^{\text{final}}.
\end{equation}

Since $\nabla_\phi \log \pi_\phi(e)$ points in the direction that increases $\pi_\phi(e)$, and $R^{\text{final}} > 0$ with probability $q$ (by Assumption~\ref{assump:high_reward}), the gradient $g_e > 0$ will \emph{increase} the probability of selecting edge $e$.

However, $e \notin G^*$, meaning this edge is not part of the optimal topology. Therefore, reinforcing $e$ is incorrect, the gradient pushes the policy toward including a suboptimal edge.

The probability that a non-beneficial edge $e \in \mathcal{E}^-$ receives an incorrect gradient update is:
\begin{equation}
\Pr(\text{Error}(e) = 1 \mid e \in \mathcal{E}^-) = \Pr(e \text{ sampled}) \cdot \Pr(R^{\text{final}} > 0) = p \cdot q.
\end{equation}

\textbf{Case 2: Beneficial edges ($e \in \mathcal{E}^+$).}

For edges $e \in \mathcal{E}^+ = G^*$, correct gradient direction requires $g_e > 0$ (reinforcement). When $R^{\text{final}} > 0$, sampled beneficial edges receive correct positive gradients. However, when $R^{\text{final}} < 0$ (which occurs with probability $1-q$), beneficial edges that were sampled receive incorrect negative gradients.

The probability that a beneficial edge receives an incorrect gradient:
\begin{equation}
\Pr(\text{Error}(e) = 1 \mid e \in \mathcal{E}^+) = \Pr(e \text{ sampled}) \cdot \Pr(R^{\text{final}} < 0) = p \cdot (1-q).
\end{equation}

\textbf{Total Expected Error.}

The expected error across all edges is:
\begin{align}
\mathbb{E}\left[\sum_{e \in \mathcal{E}} \text{Error}(e)\right] &= \sum_{e \in \mathcal{E}^-} \Pr(\text{Error}(e) = 1) + \sum_{e \in \mathcal{E}^+} \Pr(\text{Error}(e) = 1) \\
&= |\mathcal{E}^-| \cdot p \cdot q + |\mathcal{E}^+| \cdot p \cdot (1-q) \\
&= p \cdot \left[ (|\mathcal{E}| - |G^*|) \cdot q + |G^*| \cdot (1-q) \right] \\
&= p \cdot \left[ |\mathcal{E}| \cdot q - |G^*| \cdot q + |G^*| - |G^*| \cdot q \right] \\
&= p \cdot \left[ |\mathcal{E}| \cdot q + |G^*| \cdot (1 - 2q) \right].
\end{align}

Dividing by $|\mathcal{E}|$ and letting $\rho = \frac{|G^*|}{|\mathcal{E}|}$:
\begin{align}
\mathbb{E}\left[\frac{1}{|\mathcal{E}|}\sum_{e \in \mathcal{E}} \text{Error}(e)\right] &= p \cdot \left[ q + \rho \cdot (1 - 2q) \right] \\
&= p \cdot \left[ q + \rho - 2\rho q \right] \\
&= p \cdot \left[ q(1 - 2\rho) + \rho \right].
\end{align}

When $q$ is high (close to 1) and $\rho$ is small (sparse optimal topology), the dominant term is:
\begin{equation}
\mathbb{E}[\text{Error Rate}] \approx p \cdot q \cdot (1 - 2\rho) + p \cdot \rho \approx p \cdot q \cdot (1 - \rho),
\end{equation}
for $\rho \ll 1$. Since $p \cdot q$ represents the probability of sampling an edge under high reward, and $(1-\rho)$ is the fraction of non-beneficial edges, this shows that:
\begin{equation}
\mathbb{E}[\text{Error Rate}] \geq (1-\rho) \cdot p \cdot q.
\end{equation}
\end{proof}

The error rate $(1-\rho) \cdot p \cdot q$ reveals a fundamental tension in per-edge optimization after Stage 1. With $p = 0.5$ and $q > 0.5$, the error rate exceeds $(1-\rho)/4$. The error is amplified when the optimal topology is sparse ($\rho \to 0$) or when Stage 1 achieves high success rate.

\begin{corollary}[Justification for Holistic Graph Selection]
\label{cor:holistic}
The high error rate in per-edge optimization motivates treating topology selection as a holistic graph classification problem (Section~\ref{sec:topology_selection}). By evaluating entire topologies as indivisible units and selecting from a pre-generated candidate pool, our approach avoids the per-edge credit assignment problem entirely.
\end{corollary}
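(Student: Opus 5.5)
The plan is to make the informal ``avoids the credit assignment problem'' precise by contrasting the target of holistic scoring with that of per-edge gradients. The key point is that the irreducible bias in Proposition~\ref{prop:edge_error} comes from attributing a \emph{shared} scalar $R^{\text{final}}$ to each edge separately. If instead the unit of attribution is the whole graph $G_k \in \mathcal{G}$, the quantity being learned is $\mu_k(\mathcal{Q}) = \mathbb{E}[R^{\text{final}} \mid \mathcal{Q}, G_k]$, with the LLM selector frozen after Stage~1. I will show that every reward sample collected in Stage~2 is an unbiased sample of exactly this quantity. The key observation is that the co-occurrence confound of Case~1 in Proposition~\ref{prop:edge_error} disappears: a high reward under $G_k$ is credited only to $G_k$, and never to an edge that merely appears in $G_k$.

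The steps, in order, are as follows.

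\textbf{Step 1 (unbiasedness).} Fix $\mathcal{Q}$. The $M$ rollouts per query in Algorithm~\ref{alg:training} give i.i.d.\ rewards $R_k^{(1)},\dots,R_k^{(n_k)} \in [-B,B]$ for each sampled $G_k$. Each of these is an unbiased sample of $\mu_k$, and this holds regardless of the value of $q$ in Assumption (A1).

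\textbf{Step 2 (concentration).} Apply Hoeffding's inequality to get $|\hat\mu_k - \mu_k| \le B\sqrt{2\log(2K/\delta)/n_k}$ simultaneously for all $k$, with probability at least $1-\delta$ via a union bound over the $K$ candidates. This yields the $\mathcal{O}(1/\sqrt{N_{\text{samples}}})$ rate claimed in Theorem~\ref{thm:informal}(ii).

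\textbf{Step 3 (correct selection).} Let $\Delta = \mu_{k^*} - \max_{k\neq k^*}\mu_k > 0$ be the gap between the best candidate and the rest. Once $n_k \gtrsim B^2\log(K/\delta)/\Delta^2$, the empirical argmax equals $k^*$. Hence the selection error tends to zero, in contrast to the lower bound $(1-\rho)pq$ of Proposition~\ref{prop:edge_error}, which does not shrink with more samples.

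\textbf{Step 4 (link to the actual classifier).} The binary labels $y_k = \mathbb{1}[R_k>0]$ with BCE loss make $\sigma(s_k)$ a consistent estimator of $\Pr(R^{\text{final}}>0 \mid \mathcal{Q}, G_k)$. Under the cost-aware reward of Eq.~\eqref{eq:cost_reward}, the sign of the reward equals the utility $u$. Maximizing success probability therefore coincides with maximizing expected utility, and the argmax of $s_k$ inherits the guarantee of Step~3, up to the classifier's approximation error.

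I expect the main obstacle to be Step~4 together with the generalization across queries. The classifier shares parameters across queries and graphs, so the clean per-query Hoeffding argument only controls the estimation error, not the function-approximation error of the GCN scorer. My first attempt will be to state the result in two parts. The first is a realizable-case statement: if the true success probability is representable by the model class, then a standard uniform-convergence bound over a finite (or finite-VC) class preserves the $\mathcal{O}(1/\sqrt{N_{\text{samples}}})$ rate. The second is a remark that any misspecification error is a property of the model class and does not come from structural misattribution, which is the distinction the corollary needs.

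A second, milder obstacle is that the optimal topology must lie in the pool $\mathcal{G}$ for this argument to apply. I will handle this by defining $G^*$ relative to $\mathcal{G}$ and noting that random DAG generation with varied densities makes the pool a reasonable discretization of the topology space.
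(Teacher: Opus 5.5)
Your Steps 2--3 are the same Hoeffding, union-bound and margin argument as Theorem~\ref{thm:graph_classifier}. That theorem is the paper's only formal backing for this corollary, which has no separate proof. Your radius $B\sqrt{2\log(2K/\delta)/n_k}$ is correct and gives $n_k>8B^2\log(2K/\delta)/\Delta^2$. By contrast, the paper's choice $\epsilon=\gamma/4$ with the same $N_k$ only makes the exponent $\log(2K/\delta)/4$, so it actually needs $32B^2$. The problems are in the two places where you go beyond the paper.

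\textbf{The per-query framing of Step 1.} You fix $\mathcal{Q}$ and assume i.i.d.\ replicates $R_k^{(1)},\dots,R_k^{(n_k)}$ for every candidate. But Algorithm~\ref{alg:training} runs each training query on $M=5$ graphs drawn from a pool of $K=200$, once each. For a fixed query, at most $M$ candidates are observed at all, essentially once. So the requirement $n_k\gtrsim B^2\log(K/\delta)/\Delta^2$ for all $k$ can never be met, and at test time no reward is observed whatsoever.

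The paper's theorem avoids this because it does \emph{not} condition on the query. It takes $\mu_k=\mathbb{E}_{\mathcal{Q}}[R(\mathcal{Q},G_k)]$, and $\hat\mu_k$ pools over the i.i.d.\ training queries on which $G_k$ happened to be drawn. Then $N_k\approx M|\mathcal{D}|/K$ grows with the data, at the price of certifying only a single query-independent best graph.

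If you want the query-conditional statement that matches what the classifier actually does, the whole burden shifts to the uniform-convergence argument you defer. There, more than approximation error is at stake:
\begin{itemize}
\item estimation error must also be controlled through the model class;
\item an excess-risk bound averaged over $(\mathcal{Q},G)$ must be turned into argmax correctness over $K$ candidates for a given query, which needs a per-query margin and pointwise rather than averaged control over graphs.
\end{itemize}

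\textbf{Step 4 conflates two estimands.} BCE on $\mathbf{1}[R_k>0]$ targets $\Pr(u=1\mid\mathcal{Q},G_k)$, which is indeed monotone in the expected utility $2\Pr(u=1)-1$. But Step 3 certifies $\arg\max_k\mathbb{E}[R^{\text{final}}\mid\mathcal{Q},G_k]$ for the cost-aware reward~(\ref{eq:cost_reward}). For $\lambda>0$ the two argmaxes differ in general. Success probability $0.9$ with $e^{-\lambda C}=0.5$ gives expected reward $0.25$, while success probability $0.85$ with $e^{-\lambda C}=0.9$ gives about $0.6$.

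So the classifier does not inherit Step 3's guarantee. You need to rerun Steps 1--3 on the Bernoulli indicator (range $1$, gap measured in success probability) and state that the classifier selects for accuracy, not cost-aware reward. The implemented labels (top-$2$ of $M=5$ by reward, Appendix~\ref{sec:appendix_impl}) define yet another target. The paper never links its empirical-mean selector to the trained GCN, so Step 4 would be a real addition, but only once it is stated against the right estimand.
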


\subsection{Generalization Guarantee for Graph Classifier}
\label{sec:theory_generalization}

We now provide theoretical guarantees for our graph classifier approach.

\begin{theorem}[Generalization Guarantee for Graph Classifier]
\label{thm:graph_classifier}
Let $\mathcal{G} = \{G_1, G_2, \ldots, G_K\}$ be the pre-generated graph candidate pool, and let $\pi_m^*$ denote the optimized LLM selector from Stage 1.

\textbf{Assumptions:}
\begin{description}[leftmargin=1.2cm,labelwidth=1cm]
    \item[(B1)] \label{assump:optimal_exists} \textbf{(Optimal topology coverage)} The candidate pool $\mathcal{G}$ contains an optimal graph $G^* \in \mathcal{G}$.
    \item[(B2)] \label{assump:reward_gap} \textbf{(Reward separability)} Let $\mu_k = \mathbb{E}[R(\mathcal{Q}, G_k)]$ denote the expected reward for graph $G_k$. There exists a margin $\gamma > 0$ such that:
    \begin{equation}
    \mu^* - \mu_k \geq \gamma, \quad \forall G_k \neq G^*.
    \end{equation}
    \item[(B3)] \label{assump:bounded} \textbf{(Bounded reward)} The reward is bounded: $R(\mathcal{Q}, G_k) \in [-B, B]$ for all $\mathcal{Q}, G_k$.
\end{description}

\textbf{Justification of Assumptions:}
\begin{description}[leftmargin=1.2cm,labelwidth=1cm]
    \item[(B1)] In practical settings, the number of roles is typically small. For instance, MATH uses 3 roles and MMLU-Redux uses 8 predefined roles at maximum, resulting in a finite and tractable search space of candidate topologies. Therefore the probability of including the optimal or near-optimal topology is high. Furthermore, if the exact optimal graph is not in $\mathcal{G}$, a near-optimal graph with similar structure is likely to be present.
    \item[(B2)] This assumption is necessary and reflects our belief that the optimal communication topology leads to better performance given the same supernode configuration. If all topologies performed identically, topology optimization would be meaningless. The margin $\gamma$ captures the performance gap between optimal and suboptimal topologies.
    \item[(B3)] This assumption is directly satisfied by our reward formulation in Eq.~\ref{eq:cost_reward}, where the utility and normalized cost ensure bounded rewards.
\end{description}

In Stage 2, for each graph $G_k$, we collect $N_k$ i.i.d.\ samples and compute the empirical mean reward:
\begin{equation}
\hat{\mu}_k = \frac{1}{N_k} \sum_{j=1}^{N_k} R(\mathcal{Q}_j, G_k).
\end{equation}

Then, if $N_k \geq \frac{8B^2\log(2K/\delta)}{\gamma^2}$ for all $k \in [K]$, with probability at least $1 - \delta$:
\begin{equation}
\arg\max_{G_k \in \mathcal{G}} \hat{\mu}_k = G^*.
\end{equation}
\end{theorem}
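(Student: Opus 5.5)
The plan is a standard best-arm identification argument: Hoeffding's inequality for each arm, a union bound over the $K$ candidate graphs, and then the margin (B2) to turn uniform estimation accuracy into correct selection of $G^*$.

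\textbf{Step 1: concentration for a single graph.} Fix $G_k$. The samples $R(\mathcal{Q}_j, G_k)$ are i.i.d. with mean $\mu_k$. By (B3) they lie in an interval of length $2B$. Hoeffding's inequality then gives
\begin{equation}
\Pr\left(|\hat{\mu}_k - \mu_k| \geq \tfrac{\gamma}{2}\right) \leq 2\exp\left(-\frac{2N_k(\gamma/2)^2}{(2B)^2}\right) = 2\exp\left(-\frac{N_k\gamma^2}{8B^2}\right).
\end{equation}

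\textbf{Step 2: union bound.} Summing over all $K$ graphs, the probability that some $\hat{\mu}_k$ deviates from $\mu_k$ by at least $\gamma/2$ is at most $2K\exp(-N_k\gamma^2/(8B^2))$. Under $N_k \ge 8B^2\log(2K/\delta)/\gamma^2$ this is at most $\delta$. The constant $8$ in the stated sample size is exactly what the range $2B$ and accuracy $\gamma/2$ produce, so no slack is needed.

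\textbf{Step 3: selection on the good event.} On the complementary event (probability at least $1-\delta$), every $\hat{\mu}_k$ is within $\gamma/2$ of $\mu_k$. By (B1) $G^*$ is in the pool, so for every $G_k \neq G^*$, (B2) gives
\begin{equation}
\hat{\mu}^* > \mu^* - \tfrac{\gamma}{2} \geq \mu_k + \tfrac{\gamma}{2} > \hat{\mu}_k .
\end{equation}
Hence $G^*$ is the unique maximizer of $\hat{\mu}$, i.e.\ $\arg\max_{G_k \in \mathcal{G}} \hat{\mu}_k = G^*$. The strict inequalities remove any tie-breaking issue.

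\textbf{Main obstacle: what the i.i.d.\ samples are.} In Stage 2 the reward for a graph depends on the query $\mathcal{Q}_j$ and on the stochastic LLM execution, with the selector $\pi_m^*$ frozen. I would state explicitly that $\mu_k$ is the expectation over the joint distribution of queries and execution randomness. Under that reading, the $N_k$ draws per graph are i.i.d., as the theorem stipulates.

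Different graphs may share the same queries, which makes the $\hat{\mu}_k$ dependent across $k$. This does not affect the argument, because the union bound requires no independence between arms.

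Finally, the statement is about the empirical-mean selector, not the trained classifier $f_G$. I would note that the theorem justifies the labeling and selection principle, and would not claim anything about the classifier's approximation error.
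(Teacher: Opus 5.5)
Your proof is correct and follows the same route as the paper's: Hoeffding with range $2B$, a union bound over the $K$ graphs, and the margin (B2) applied on the good event. The only real difference is the deviation level, and it works in your favor. The paper takes $\epsilon=\gamma/4$, for which the exponent is $N_k\gamma^2/(32B^2)$, so under $N_k \ge 8B^2\log(2K/\delta)/\gamma^2$ its bound is only $2K(\delta/(2K))^{1/4}$ rather than $\delta$; that choice would need the constant $32$. Your $\epsilon=\gamma/2$, with strict inequalities on the complement of $\{|\hat{\mu}_k-\mu_k|\ge\gamma/2\}$, is what actually makes the stated constant $8$ work.
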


\begin{proof}
By Hoeffding's inequality, for bounded random variables in $[-B, B]$:
\begin{equation}
\Pr\left[|\hat{\mu}_k - \mu_k| > \epsilon\right] \leq 2\exp\left(-\frac{2N_k \epsilon^2}{(2B)^2}\right) = 2\exp\left(-\frac{N_k \epsilon^2}{2B^2}\right).
\end{equation}

Applying union bound over all $K$ graphs:
\begin{equation}
\Pr\left[\exists k: |\hat{\mu}_k - \mu_k| > \epsilon\right] \leq 2K\exp\left(-\frac{N_k \epsilon^2}{2B^2}\right).
\end{equation}

Setting $\epsilon = \gamma/4$ and $N_k \geq \frac{8B^2\log(2K/\delta)}{\gamma^2}$:
\begin{equation}
2K\exp\left(-\frac{1}{2B^2} \cdot \frac{8B^2\log(2K/\delta)}{\gamma^2} \cdot \frac{\gamma^2}{16}\right) = 2K\exp(-\log(2K/\delta)) = \delta.
\end{equation}

Thus, with probability at least $1 - \delta$, for all $k$ simultaneously: $|\hat{\mu}_k - \mu_k| \leq \gamma/4$.

Under this event, for any $G_k \neq G^*$:
\begin{align}
\hat{\mu}^* - \hat{\mu}_k &\geq (\mu^* - \frac{\gamma}{4}) - (\mu_k + \frac{\gamma}{4}) \\
&= (\mu^* - \mu_k) - \frac{\gamma}{2} \\
&\geq \gamma - \frac{\gamma}{2} = \frac{\gamma}{2} > 0.
\end{align}

Therefore, $G^* = \arg\max_{G_k} \hat{\mu}_k$.
\end{proof}

This theorem reveals a fundamental difference from per-edge policy gradient:

\begin{itemize}
    \item \textbf{Per-edge RL} (Proposition~\ref{prop:edge_error}): The gradient error $\Omega((1-\rho)pq)$ is \emph{irreducible}, and persists regardless of sample size due to inherent credit assignment ambiguity.

    \item \textbf{Graph classifier} (Theorem~\ref{thm:graph_classifier}): The estimation error $|\hat{\mu}_k - \mu_k| = \mathcal{O}(B/\sqrt{N_k})$ \emph{vanishes} with more samples. With $N_k = \mathcal{O}(B^2\log(K)/\gamma^2)$, the optimal graph is identified with high probability.
\end{itemize}

The key insight is that: treating graph selection as supervised learning transforms an ill-posed credit assignment problem into a well-posed estimation problem.

\section{Additional Experimental Details}
\label{sec:appendix_exp}

\subsection{Implementation Details}
\label{sec:appendix_impl}

\paragraph{Training Procedure.} Our training procedure consists of two stages:

\textbf{Stage 1 (Supernode Optimization):} We optimize the LLM selector $\pi_m$ using the Adam optimizer~\citep{kingma2014adam} with learning rate $2\times10^{-3}$ and weight decay $5\times10^{-4}$. During this stage, graph topologies are randomly sampled from the candidate pool $\mathcal{G}$ to expose the LLM selector to diverse communication patterns. The mixing coefficient $\alpha$ for multi-level rewards is set to $0.5$ based on the theoretical analysis in Corollary~\ref{cor:sufficient}. The training iterations $T_1$ is searched over $\{5,10\}$ with batch size 8.

\textbf{Stage 2 (Graph Classifier Training):} After Stage 1 converges, we freeze the LLM selector $\pi_m$ and train the GCN-based graph classifier. For each training instance, we sample $M=5$ random graphs from the candidate pool. We select the top-2 highest reward graphs as positive samples and the remaining 3 as negative samples. The GCN encoder consists of 2 graph convolutional layers with hidden dimension 256, followed by mean pooling and a linear classifier. We tune the dropout rate across $\{0.05, 0.1, 0.2, 0.5\}$ and select the best model based on validation performance. We train for $T_2 = 20$ epochs with early stop.

\paragraph{Graph Candidate Pool.} We maintain a candidate pool of $K=200$ randomly generated directed acyclic graphs (DAGs) for both training and inference. The graphs are generated with varying edge densities sampled uniformly from $[0.3, 0.75]$ to ensure diversity, covering sparse, medium, and dense connectivity patterns. 

\paragraph{Sentence Encoder.} For query and role encoding, we use the pre-trained \texttt{all-MiniLM-L6-v2} sentence transformer~\citep{reimers2019sentence} to compute embeddings $\mathbf{h}_{\mathcal{Q},r}$ and $\mathbf{h}_{m_\ell}$. The encoder produces 384-dimensional embeddings. The model is frozen without finetuning.

\paragraph{Data Splits.} For all benchmarks, we use an 80\%/20\% train/validation split. The training set is used for both Stage 1 and Stage 2 optimization, while the validation set is used for model selection (dropout tuning) and early stopping. Final evaluation is performed on the official test sets of each benchmark.

\subsection{Effect of Proposers in Supernode}
\label{sec:appendix_proposers}

\begin{table}[t]
\centering
\caption{Effect of the number of proposers per supernode on MMLU-Redux.}
\label{tab:num_proposers}
\begin{tabular}{l|cc}
\toprule
\rowcolor{headerblue}
\textcolor{white}{\textbf{Method}} &
\textcolor{white}{\textbf{Acc (\%)}} &
\textcolor{white}{\textbf{Cost (\$)}} \\
\midrule
w/ 8 proposers & 88.80 & 2.32 \\
\rowcolor{lightgray}
w/ 6 proposers & 91.20 & 1.97 \\
\midrule
\rowcolor{lightblue}
\textbf{Ours} (2 proposers) & \textbf{95.20} & \textbf{1.29} \\
\bottomrule
\end{tabular}
\end{table}

We investigate the impact of the number of proposers per supernode on MMLU-Redux, as shown in Table~\ref{tab:num_proposers}. Increasing the number of proposers from 2 to 6 or 8 leads to significant performance degradation: accuracy drops from 95.20\% (2 proposers) to 91.20\% (6 proposers) and 88.80\% (8 proposers).

Interestingly, the cost does not increase proportionally with the number of proposers. We observe that during training, the system increasingly selects the \texttt{skip} action for synthesizer positions as the number of proposers grows, effectively pruning some roles from the MAS. This adaptive behavior differs from the 2-proposer configuration where pruning is less frequent.

The performance drop with more proposers can be attributed to two factors. First, when the system learns to handle the increased complexity of many proposers, some meaningful roles may be inadvertently pruned during the optimization process, reducing the diversity of perspectives available for synthesis. Second, when too many proposers are active, the synthesizer receives excessively long context from multiple proposer outputs, leading to the \ti{lost-in-the-middle} phenomenon~\citep{liu2024lost} where critical information is overlooked. These findings suggest that a moderate number of proposers (e.g., 2) strikes the optimal balance between diversity of perspectives and the synthesizer's aggregation capacity.

\subsection{Effect of Graph Pool Size}
\label{sec:appendix_graph_pool}

\begin{table}[t]
\centering
\caption{Effect of graph pool size $K$ on MMLU-Redux.}
\label{tab:graph_pool_size}
\begin{tabular}{l|c}
\toprule
\rowcolor{headerblue}
\textcolor{white}{\textbf{Method}} &
\textcolor{white}{\textbf{Acc (\%)}} \\
\midrule
w/ $K=50$ & 93.60 \\
\rowcolor{lightgray}
w/ $K=500$ & 94.40 \\
\midrule
\rowcolor{lightblue}
\textbf{Ours} ($K=200$) & \textbf{95.60} \\
\bottomrule
\end{tabular}
\end{table}

We investigate the impact of the graph candidate pool size $K$ on MMLU-Redux, as shown in Table~\ref{tab:graph_pool_size}. The results demonstrate that \ours is not sensitive to the choice of $K$: performance remains stable across different pool sizes, with accuracy ranging from 93.60\% ($K=50$) to 95.60\% ($K=200$).

\subsection{Case Studies}
\label{sec:appendix_case}

We present qualitative case studies from each benchmark to illustrate how \ours adaptively selects LLM configurations and communication topologies based on task characteristics.

\begin{table*}[t]
\centering
\caption{Case studies on MMLU-Redux. The typical configuration assigns GPT-5-Mini or GPT-5-Nano as proposers and GPT-5-Nano as the aggregator, favoring cost-efficient models for knowledge-intensive QA tasks.}
\label{tab:case_mmlu}
\begin{tabular}{p{0.48\textwidth}|p{0.48\textwidth}}
\toprule
\rowcolor{headerblue}
\multicolumn{2}{c}{\textcolor{white}{\textbf{MMLU-Redux Case Studies}}} \\
\midrule
\textbf{Case 1: Nutrition Question} & \textbf{Case 2: Communication Theory} \\
\midrule
\small
\textbf{Query:} Which of the following appears to lower bad cholesterol? \newline
A. Vitamin D \quad B. Niacin \newline
C. Thiamine \quad D. Riboflavin
&
\small
\textbf{Query:} According to ``measurement,'' what is the step that occurs between an individual gaining information and changing behavior? \newline
A. coorientation \quad B. opinion change \newline
C. reaction formation \quad D. semantic encoding \\
\midrule
\includegraphics[width=0.46\textwidth]{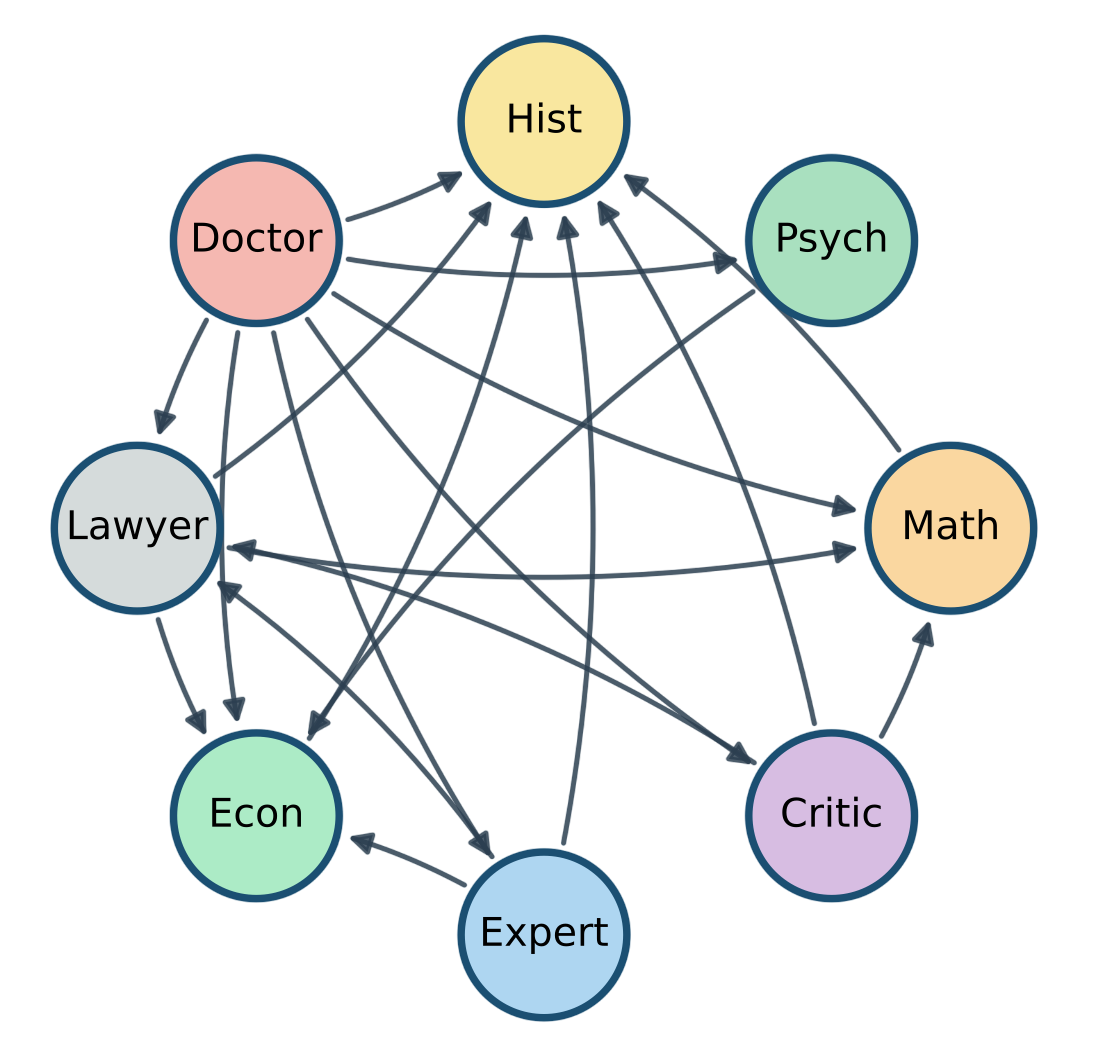}
&
\includegraphics[width=0.46\textwidth]{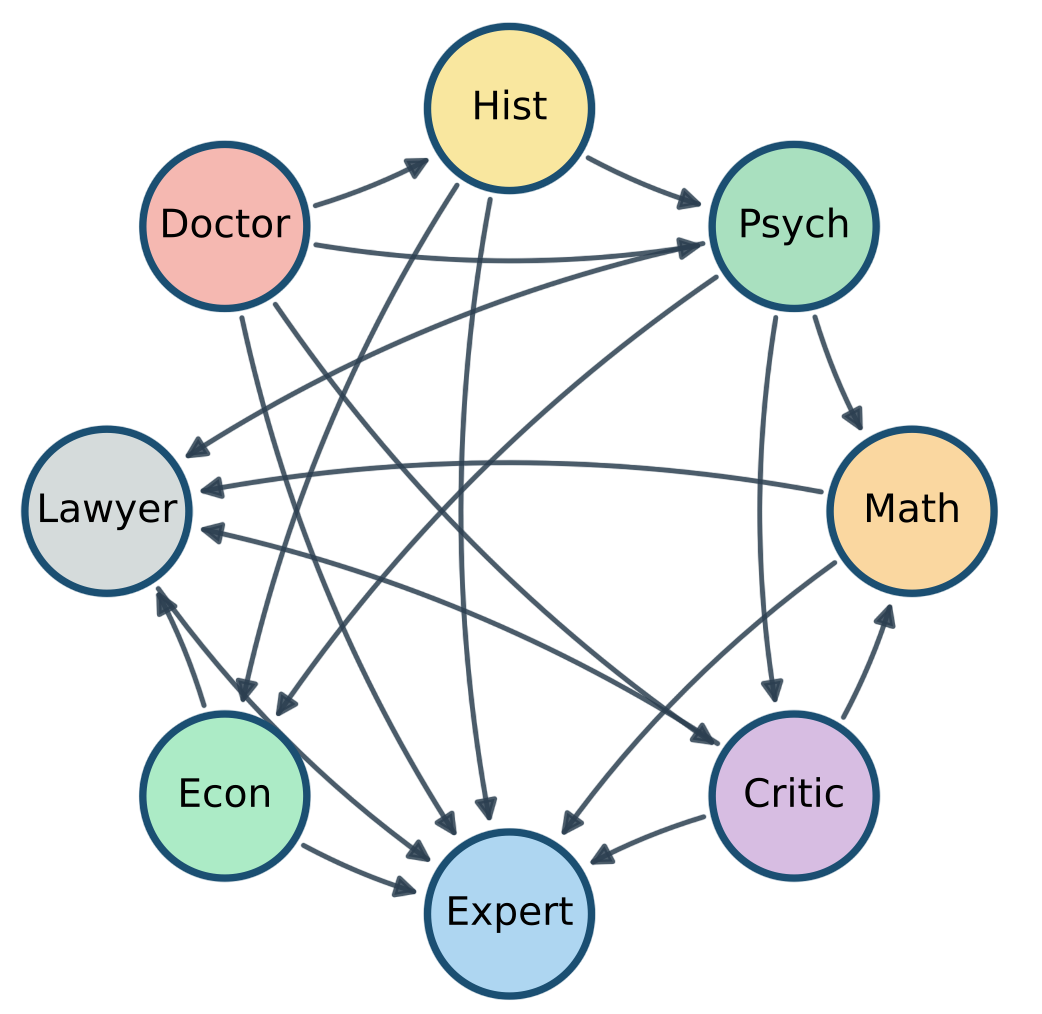} \\
\bottomrule
\end{tabular}
\end{table*}

\begin{table*}[t]
\centering
\caption{Case studies on HumanEval. The typical configuration assigns Qwen3-8B, Qwen3-80B, or LLaMA3.1-8B as proposers, leveraging diverse code-specialized models for programming tasks.}
\label{tab:case_humaneval}
\begin{tabular}{p{0.48\textwidth}|p{0.48\textwidth}}
\toprule
\rowcolor{headerblue}
\multicolumn{2}{c}{\textcolor{white}{\textbf{HumanEval Case Studies}}} \\
\midrule
\textbf{Case 1: Sum and Product} & \textbf{Case 2: Filter by Substring} \\
\midrule
\small
\textbf{Query:} \texttt{def sum\_product(numbers: List[int]) -> Tuple[int, int]:} \newline
\texttt{""" For a given list of integers, return a tuple consisting of a sum and a product of all the integers in a list..."""}
&
\small
\textbf{Query:} \texttt{def filter\_by\_substring(strings: List[str], substring: str) -> List[str]:} \newline
\texttt{""" Filter an input list of strings only for ones that contain given substring..."""} \\
\midrule
\includegraphics[width=0.46\textwidth]{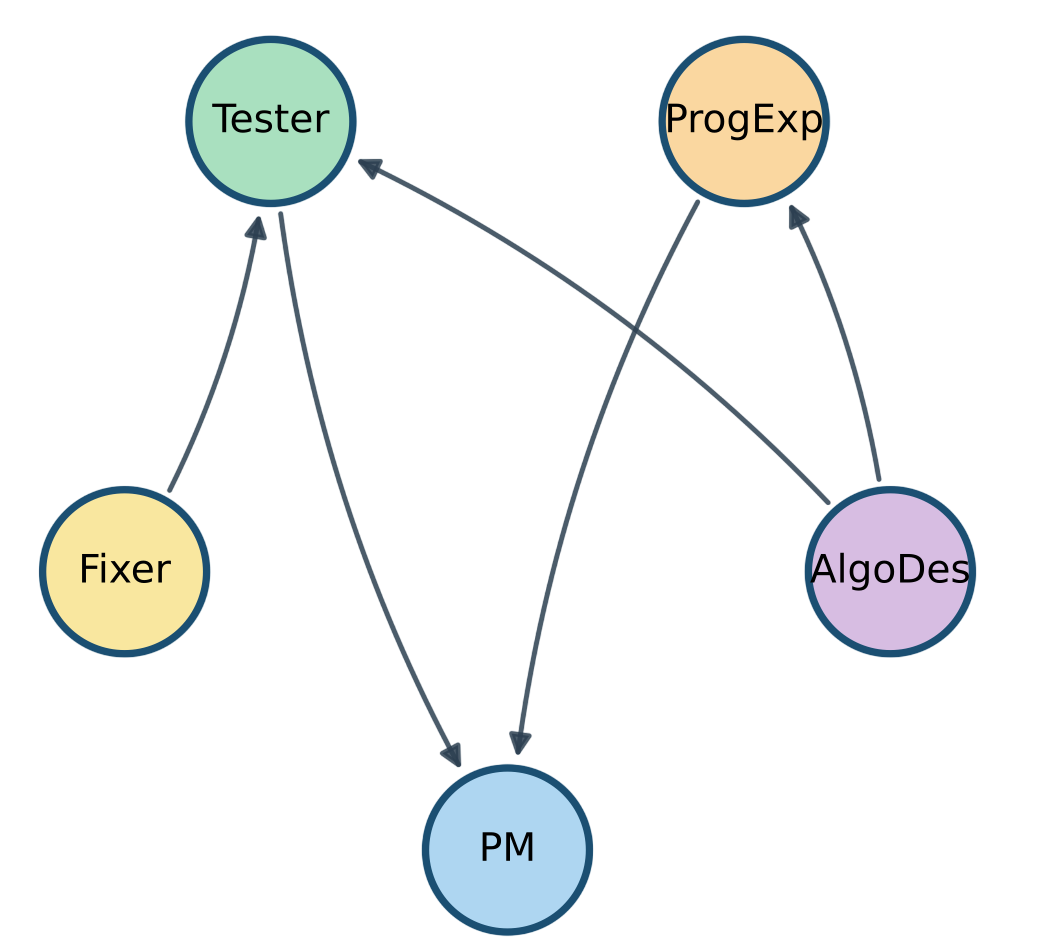}
&
\includegraphics[width=0.46\textwidth]{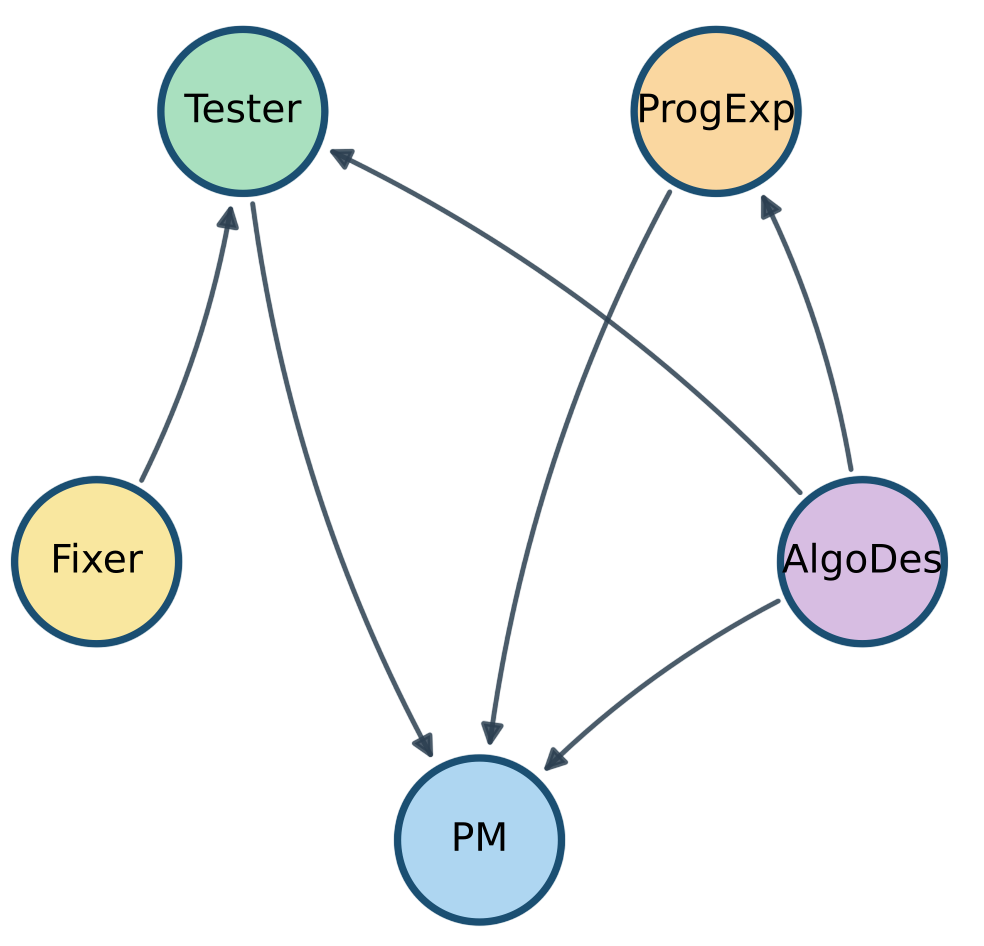} \\
\bottomrule
\end{tabular}
\end{table*}

\begin{table*}[t]
\centering
\caption{Case studies on MATH. The typical configuration assigns Qwen3-8B combined with Qwen3-80B or DeepSeek-V3.2 as proposers, utilizing strong mathematical reasoning models.}
\label{tab:case_math}
\begin{tabular}{p{0.48\textwidth}|p{0.48\textwidth}}
\toprule
\rowcolor{headerblue}
\multicolumn{2}{c}{\textcolor{white}{\textbf{MATH Case Studies}}} \\
\midrule
\textbf{Case 1: Geometry Reflection} & \textbf{Case 2: Cone Volume} \\
\midrule
\small
\textbf{Query:} Triangle $ABC$ with vertices $A(-2, 0)$, $B(1, 4)$ and $C(-3, 2)$ is reflected over the $y$-axis to form triangle $A'B'C'$. What is the length of a segment drawn from $C$ to $C'$?
&
\small
\textbf{Query:} The diameter of a cone is 30 decimeters. If the height is two times the radius, what is the volume of the cone in cubic decimeters? \\
\midrule
\includegraphics[width=0.46\textwidth]{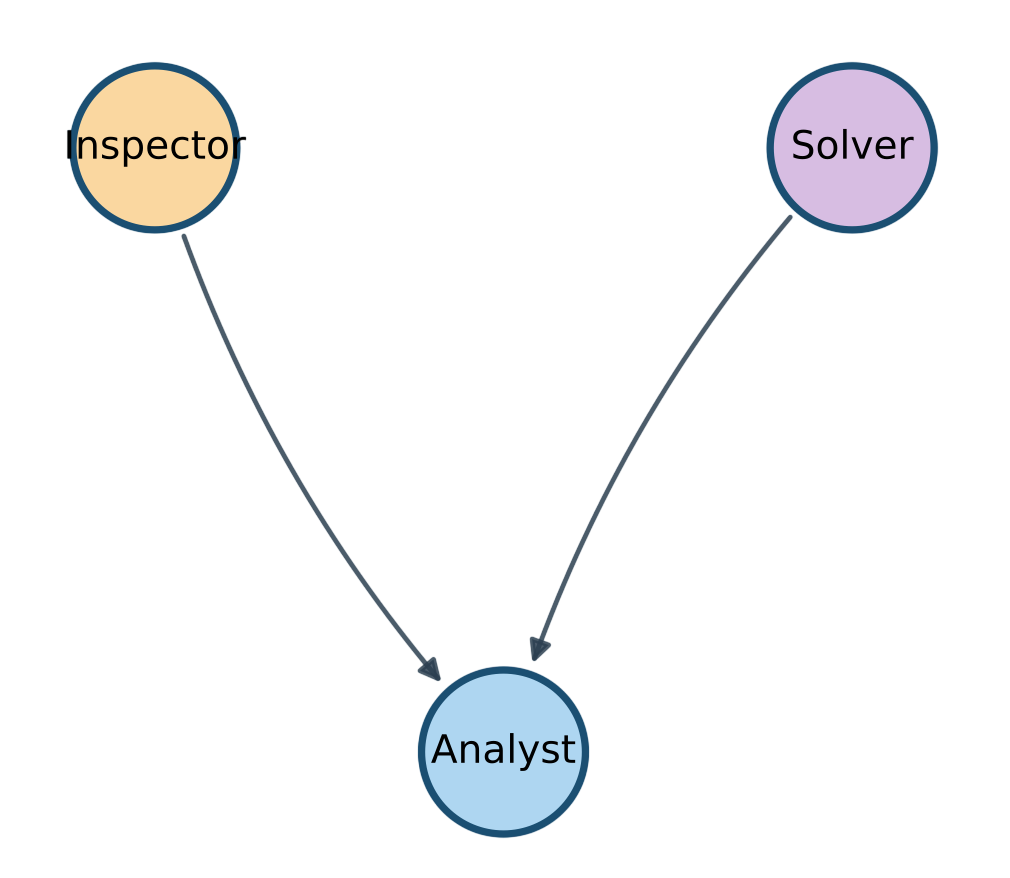}
&
\includegraphics[width=0.46\textwidth]{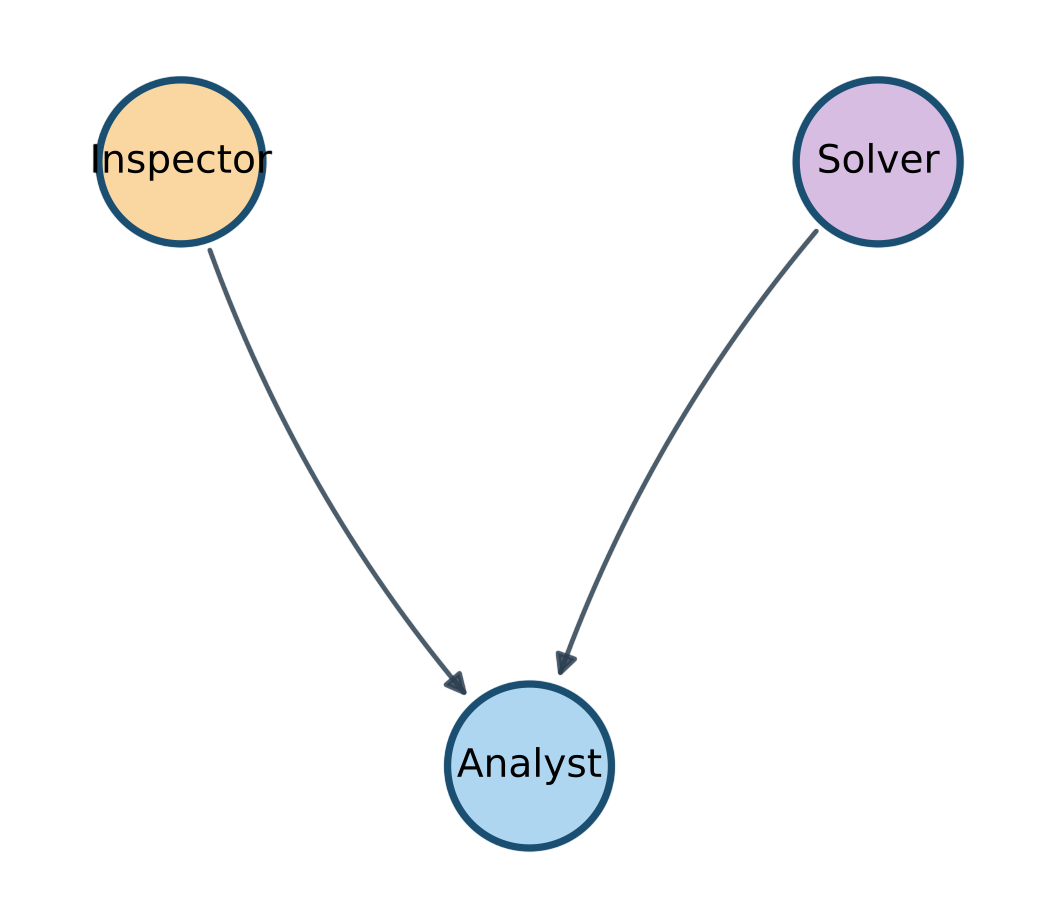} \\
\bottomrule
\end{tabular}
\end{table*}

\subsection{Training Cost Analysis}
\label{sec:appendix_training_cost}

Table~\ref{tab:training_cost} presents the training cost comparison across different learning-based methods. AFlow incurs substantially higher training costs due to its Monte Carlo Tree Search optimization with Claude 3.5-Sonnet. Our method is more cost-efficient than GDesigner, which lacks LLM selection and defaults to using the strongest model (GPT-5-Mini) for all agents, whereas our approach incorporates LLM selection with cost-aware rewards that adaptively choose cheaper models when appropriate. Compared to MASRouter, our training cost is slightly higher due to two factors: (1) different LLM selection preferences during training, and (2) our two-stage algorithm requires additional sample collection in Stage 2 for graph classifier training, which incurs extra cost.

\begin{table}[H]
\centering
\caption{Training cost comparison (in USD) across benchmarks.}
\label{tab:training_cost}
\begin{tabular}{l|cccc}
\toprule
\rowcolor{headerblue}
\textcolor{white}{\textbf{Dataset}} & \textcolor{white}{\textbf{GDesigner}} & \textcolor{white}{\textbf{AFlow}} & \textcolor{white}{\textbf{MASRouter}} & \textcolor{white}{\textbf{Ours}} \\
\midrule
MATH & \$1.22 & \$18.62 & \$1.97 & \$2.03 \\
\rowcolor{lightgray}
HumanEval++ & \$1.95 & \$28.64 & \$0.54 & \$1.55 \\
MMLU-Redux & \$3.51 & \$13.94 & \$0.46 & \$1.85 \\
\bottomrule
\end{tabular}
\end{table}

\subsection{Prompts}
\label{sec:appendix_prompts}

We provide the text profiles used for embedding-based LLM selection and role assignment. These profiles are encoded using a sentence transformer to compute similarity scores during the selection process.

\paragraph{LLM Profiles.} Table~\ref{tab:llm_profiles} presents the text descriptions for each LLM option in our candidate pool. These profiles capture model capabilities, cost characteristics, and recommended use cases.

\begin{table*}[t]
\centering
\caption{LLM profiles used for embedding-based model selection. Each profile describes model strengths, cost tier, and recommended use cases.}
\label{tab:llm_profiles}
\small
\begin{tabular}{p{0.18\textwidth}|p{0.78\textwidth}}
\toprule
\rowcolor{headerblue}
\textcolor{white}{\textbf{Model}} & \textcolor{white}{\textbf{Profile Description}} \\
\midrule
\texttt{qwen3-8b} & Fast and efficient 8B parameter model from Alibaba. Very fast inference, low latency, cost-effective. Good for basic text generation and simple Q\&A. Very low cost. \\
\midrule
\texttt{qwen3-80b} & Powerful 80B mixture-of-experts model from Alibaba. Strong reasoning, excellent at complex tasks, good at math and coding. Good for complex reasoning and mathematical problem-solving. Medium-high cost. \\
\midrule
\texttt{deepseek-r1-14b} & Math-focused 14B model optimized for reasoning. Excellent mathematical reasoning and step-by-step problem solving. Good for math problems and logical reasoning. Low-medium cost. \\
\midrule
\texttt{llama-8b} & General-purpose 8B model from Meta. Well-balanced capabilities with good instruction following. Good for general tasks, conversation, and basic reasoning. Very low cost. \\
\midrule
\texttt{deepseek-v3} & State-of-the-art reasoning model with excellent performance. Top-tier reasoning, excellent at complex problems, strong coding ability. Good for complex reasoning and advanced math. Medium cost. \\
\midrule
\texttt{gemma-3-27b} & Balanced 27B model from Google. Good balance of speed and capability with strong instruction following. Good for medium complexity tasks. Medium cost. \\
\midrule
\texttt{gpt-5-nano} & Extremely fast and cheap model from OpenAI. Very low latency and lowest cost. Good for simple text processing and basic Q\&A. Lowest cost option. \\
\midrule
\texttt{GPT-5-Mini} & Capable model from OpenAI with strong reasoning. Strong reasoning and excellent at complex tasks. Good for complex reasoning and multi-step problems. Medium-high cost. \\
\midrule
\texttt{gpt-4o-mini} & Capable and reliable model from OpenAI. Strong reasoning, good at coding, reliable instruction following. Good for complex reasoning and code generation. Medium-high cost. \\
\midrule
\texttt{skip} & Special token: Do not assign any LLM to this position. Removes this agent from the pipeline entirely. No API call made, no cost incurred. Use when position is not needed for the current task. \\
\bottomrule
\end{tabular}
\end{table*}

\paragraph{Role Profiles.} Table~\ref{tab:role_profiles} presents the text descriptions for each agent role. These profiles define role responsibilities and specializations used in the supernode configuration.

\begin{table*}[t]
\centering
\caption{Role profiles used for embedding-based role assignment. Roles are categorized by their primary application domain.}
\label{tab:role_profiles}
\small
\begin{tabular}{p{0.20\textwidth}|p{0.76\textwidth}}
\toprule
\rowcolor{headerblue}
\textcolor{white}{\textbf{Role}} & \textcolor{white}{\textbf{Profile Description}} \\
\midrule
\multicolumn{2}{c}{\cellcolor{lightgray}\textit{Mathematical Reasoning Roles (MATH benchmark)}} \\
\midrule
Mathematical Analyst & Specialized in breaking down complex math problems. Identifies mathematical structure, defines variables, and sets up equations. Best for initial problem understanding and mathematical modeling. \\
\midrule
Math Solver & Specialized in executing mathematical computations. Solves equations, performs calculations, and verifies numerical results. Best for carrying out calculations and deriving numerical answers. \\
\midrule
Inspector & Specialized in reviewing and validating solutions. Checks solution correctness, identifies errors, and verifies final answers. Best for final verification and error catching. \\
\midrule
\multicolumn{2}{c}{\cellcolor{lightgray}\textit{Programming Roles (HumanEval benchmark)}} \\
\midrule
Project Manager & Specialized in design patterns and code structure. Coordinates solution strategy and organizes workflow. Best for complex problems requiring structured multi-step approaches. \\
\midrule
Algorithm Designer & Specialized in algorithm design and pseudocode. Designs algorithms, creates pseudocode, and plans computational steps. Best for problems requiring novel algorithms or optimization. \\
\midrule
Test Analyst & Specialized in test cases and edge conditions. Identifies edge cases, designs test scenarios, and validates solutions. Best for ensuring solution robustness across all inputs. \\
\midrule
Programming Expert & Specialized in writing executable code. Implements algorithms, writes clean code, and handles edge cases. Best for tasks requiring actual code implementation. \\
\midrule
Bug Fixer & Specialized in debugging and code correction. Identifies bugs, fixes errors, and improves code robustness. Best for correcting issues in existing code. \\
\midrule
\multicolumn{2}{c}{\cellcolor{lightgray}\textit{Knowledge QA Roles (MMLU-Redux benchmark)}} \\
\midrule
Knowledgeable Expert & Specialized in domain knowledge and research. Provides domain expertise and searches for relevant information. Best for problems requiring specialized domain knowledge. \\
\midrule
Critic & Specialized in critical analysis and feedback. Critiques solutions, identifies weaknesses, and suggests improvements. Best for improving solution quality through critical review. \\
\midrule
Mathematician & For STEM questions. Handles mathematical reasoning, calculations, and formal logic. Best for math, physics, and engineering questions. \\
\midrule
Psychologist & For behavioral and social questions. Analyzes human behavior, cognitive processes, and social dynamics. Best for psychology and sociology questions. \\
\midrule
Historian & For historical and cultural questions. Provides historical context, analyzes events, and identifies patterns. Best for history and political science questions. \\
\midrule
Doctor & For health and biology questions. Covers medical knowledge, biological processes, and health advice. Best for medicine and biology questions. \\
\midrule
Lawyer & For legal and ethical questions. Handles legal reasoning, case analysis, and ethical frameworks. Best for law and ethics questions. \\
\midrule
Economist & For economic and business questions. Covers economic analysis, market dynamics, and financial reasoning. Best for economics, business, and finance questions. \\
\midrule
Programmer & For computer science questions. Covers algorithm design, coding concepts, and system architecture. Best for computer science and technology questions. \\
\bottomrule
\end{tabular}
\end{table*}


\end{document}